\documentclass[12pt]{article}

\usepackage{amsfonts}
\usepackage{amsmath}
\usepackage{amsbsy}
\usepackage{graphicx}
\usepackage{amssymb}
\usepackage{amsthm}
\usepackage{bm}
\usepackage{epsfig}
\usepackage{latexsym}
\usepackage{pdflscape}
\numberwithin{equation}{section}
\input amssym.def
\input amssym.tex
\allowdisplaybreaks

\setlength{\topmargin}{-0.9cm}
\setlength{\oddsidemargin}{-0.4cm}
\setlength{\evensidemargin}{-0.4cm}
\setlength{\textheight}{22.5cm}
\setlength{\textwidth}{16.8cm}

\parskip=6pt plus 2pt minus 1pt

\DeclareMathOperator{\Det}{Det}
\DeclareMathOperator{\Ai}{Ai}

\newcommand{\ii}{\mathtt{i}}
\newcommand{\p}{{\bm Q}}
\newcommand{\q}{{\bm P}}
\newcommand{\e}{{E}}

\def\PP{\mathbb{P}}
\def\sgn{\mathrm{sgn}}

\newcounter{aff}

\theoremstyle{plain}
\newtheorem{theorem}{Theorem}[section]
\newtheorem{lemma}[theorem]{Lemma}

\begin{document}

\begin{titlepage}
\begin{center}
{\Large\bf ABJ Fractional Brane from ABJM Wilson Loop}

\bigskip\bigskip
{\large Sho Matsumoto\footnote[1]{\tt sho-matsumoto@math.nagoya-u.ac.jp}
\quad and \quad
Sanefumi Moriyama\footnote[2]{\tt moriyama@math.nagoya-u.ac.jp}
}\\
\bigskip\bigskip
${}^{*,\dagger}$\,{\normalsize\it 
Graduate School of Mathematics, Nagoya University\\
Nagoya 464-8602, Japan} \bigskip\\
${}^{\dagger}$\,{\normalsize\it
Kobayashi Maskawa Institute, Nagoya University\\
Nagoya 464-8602, Japan} \bigskip\\
${}^{\dagger}$\,{\normalsize\it
Yukawa Institute for Theoretical Physics, Kyoto University\\
Kyoto 606-8502, Japan} \bigskip\\

\end{center}

\begin{abstract}
We present a new Fermi gas formalism for the ABJ matrix model.
This formulation identifies the effect of the fractional M2-brane in
the ABJ matrix model as that of a composite Wilson loop operator in
the corresponding ABJM matrix model.
Using this formalism, we study the phase part of the ABJ partition
function numerically and find a simple expression for it.
We further compute a few exact values of the partition function at
some coupling constants.
Fitting these exact values against the expected form of the grand
potential, we can determine the grand potential with exact
coefficients.
The results at various coupling constants enable us to conjecture an
explicit form of the grand potential for general coupling constants.
The part of the conjectured grand potential from the perturbative sum,
worldsheet instantons and bound states is regarded as a natural
generalization of that in the ABJM matrix model, though the membrane
instanton part contains a new contribution.
\end{abstract}
\end{titlepage}

\section{Introduction}\label{intro}
An explicit Lagrangian description of multiple M2-branes \cite{ABJM} 
has opened up a new window to study M-theory or non-perturbative
string theory.
It was proposed that $N$ multiple M2-branes on
${\mathbb C}^4/{\mathbb Z}_k$ are described by ${\mathcal N}=6$
supersymmetric Chern-Simons-matter theory with gauge group
$U(N)\times U(N)$ and levels $k$ and $-k$.
Due to supersymmetry, partition function and vacuum expectation values
of BPS Wilson loops in this theory on $S^3$ were reduced to a matrix
integration \cite{P,KWY,J,HHL}, which is called the ABJM matrix
model.
Here the coupling constant of the matrix model is related to the
level $k$ inversely.

The ABJM matrix model has taught us much about M-theory or stringy
non-perturbative effects.
Among others, we have learned \cite{DMP1} that it reproduces the
$N^{3/2}$ behavior of the degrees of freedom when $N$ multiple
M2-branes coincide, as predicted from the gravity dual \cite{KT}.
Also, as we see more carefully below, it was found in \cite{HMO2}
that all the divergences in the worldsheet instantons are cancelled
exactly by the membrane instantons.
This reproduces the lesson we learned in the birth of M-theory or
non-perturbative strings:
String theory is not just a theory of strings.
It is only after we include non-perturbative branes that string theory
becomes safe and sound.

After the pioneering paper \cite{DMP1} which reproduced the leading
$N^{3/2}$ behavior, the main interest in the study of the ABJM matrix
model was focused on the perturbative sum \cite{FHM,KEK} and instanton
effects \cite{DMP1,DMP2}.
All of the computations in these papers were done in the 't Hooft
limit, $N\to\infty$ with the 't Hooft coupling $\lambda=N/k$ held
fixed, though for approaching to the M-theory regime with a fixed
background, we have to take a different limit.
Namely, we have to consider the limit $N\to\infty$ with the parameter
$k$ characterizing M-theory background fixed \cite{HKPT,O}.
To overcome this problem, in \cite{MP} the matrix model was rewritten,
using the Cauchy determinant formula, into the partition function of a
Fermi gas system with $N$ non-interacting particles, where the Planck
scale is identified with the level: $\hbar=2\pi k$.
This expression separates the roles of $k$ from $N$, which enables us
to take the M-theory limit.
Note that the M-theory limit probes quite different regimes from the
't Hooft limit.
Especially, using the WKB expansion in the M-theory limit, we can
study the $k$ expansion of the membrane instantons
systematically.

Using the Fermi gas formalism, we can also compute several exact
values of the partition function with finite $N$ at some coupling
constants \cite{HMO1,PY}.
We can extrapolate these exact values to the large $N$ regime and
read off the grand potential \cite{HMO2}.
The grand potential reproduces perfectly the worldsheet instanton
effects predicted by its dual topological string theory on local
$\PP^1\times\PP^1$ when instanton number is smaller than $k/2$, though
serious discrepancies appear beyond it.
Namely, the worldsheet instanton part of the grand potential is
divergent at some values of the coupling constant, while the partition
function of the matrix model is perfectly finite in the whole region
of the coupling constant.
By requiring the cancellation of the divergences and the conformance
to the finite exact values of the partition function at these coupling
constants, we can write down a closed expression for the first few
membrane instantons for general coupling constants \cite{HMO2,CM},
which also matches with the WKB expansion.
Furthermore, using the exact values, we can study the bound states of
the worldsheet instantons and the membrane instantons \cite{HMO3}.
We also find that the instanton effects consist only of the
contributions from the worldsheet instantons, the membranes instantons
and their bound states, and no other contributions appear.
Finally in \cite{HMMO} we relate the membrane instanton to the
quantization of the spectral curve of the matrix model, which is
further related to the refined topological strings on local
$\PP^1\times\PP^1$ in the Nekrasov-Shatashivili limit
\cite{MM1,MM2,ACDKV}.

From the exact solvability viewpoints, we could say that the ABJM matrix
model belongs to a new class of solvable matrix models besides that of
the Gaussian ones and that of the original Chern-Simons ones.
As we have seen, this class of matrix models can be rewritten into a
statistical mechanical model using the Cauchy determinant formula and
contains an interesting structure of pole cancellations between
worldsheet instantons and membrane instantons.
The ABJM matrix model is the only example satisfying these properties
so far.

The most direct generalization of the ABJM theory is the ABJ theory
\cite{ABJ} with the inclusion of fractional branes.
It was proposed that ${\mathcal N}=6$ supersymmetric
Chern-Simons-matter theory with gauge group $U(N_1)\times U(N_2)$ and
the levels $k,-k$ describes ${\rm min}(N_1,N_2)$ M2-branes with
$|N_1-N_2|$ fractional M2-branes on ${\mathbb C}^4/{\mathbb Z}_k$.
The partition function and the vacuum expectation values of the
BPS Wilson loops in the ABJ theory are also reduced to matrix models.
Without loss of generality we can assume $M=N_2-N_1\ge 0$ and $k\ge 0$
for expectation values of hermitian operators.
The unitarity constraint requires $M$ to satisfy $0\le M\le k$.

The integration measure of the ABJM matrix model preserves the super
gauge group $U(N|N)$ while that of the ABJ matrix model preserves
$U(N_1|N_2)$ \cite{DT,MPtop}.
In the language of the topological string theory, the ABJM matrix
model corresponds to the background geometry local $\PP^1\times\PP^1$
with two identical Kahler parameters, while the ABJ matrix model
corresponds to a general non-diagonal case.
Hence, the ABJ matrix model is a direct generalization also from this
group-theoretical or topological string viewpoint.

In this paper we would like to study how the nice structures found in
\cite{MP,HMO1,HMO2,HMO3,HMMO} are generalized to the ABJ matrix
model.
We start our project by presenting a Fermi gas formalism for the ABJ
matrix model.
Our formalism shares the same density matrix as that of the ABJM
matrix model and hence the same spectral problem \cite{KM}.
The effects of fractional branes are encoded in a determinant factor
which takes almost the same form as that of the half-BPS Wilson loops
in the ABJM matrix model \cite{HHMO}.

Another interesting Fermi-gas formalism was proposed previously by the
authors of \cite{AHS}.\footnote{There were some points in \cite{AHS}
which need justification.
This is another motivation for our current proposal.
After we finished establishing this new formalism and proceeded to
studying the grand potential, we were informed by M.~Honda of his
interesting work \cite{H}.}
Compared with their formulation, our formalism has an advantage in the
numerical analysis since the density matrix is the same and all the
techniques used previously can be applied here directly.

In the formalism of \cite{AHS}, they found that the formula with
integration along the real axis is only literally valid for
$0\le M\le k/2$.
For $k/2<M\le k$, additional poles get across the real axis and we
need to deform the integration contour to avoid these poles.
Here we find that the same deformation is necessary in our formalism.
Besides, we have pinned down the origin of this deformation in the
change of variables in the Fourier transformation.

We believe that our Fermi gas formalism has also cast a new viewpoint
to the fractional branes.
In string theory, it was known that graviton sometimes puffs up into a
higher-dimensional object, which is called giant graviton \cite{MGST}.
In the gauge theory picture, this object is often described as a
determinant operator.
Our Fermi gas formalism might suggest an interpretation of the
fractional branes in the ABJ theory as these kinds of composite
objects, though the precise identification needs to be elaborated.
Later we will see that the derivation of our Fermi gas formalism
relies on a modification of the Frobenius symbol (see figure
\ref{frobenius}).
Since the hook representation has a natural interpretation as fermion
excitations, this modification can be regarded as shifting the sea
level of the Dirac sea.
This observation may be useful for giving a better interpretation of
our formula.

Using our new formalism we can embark on studying the instanton
effects.
First of all, we compute first several exact or numerical values of the
partition function.
From these studies, we find that the phase part of the partition
function has a quite simple expression.
The grand potential defined by the partition function after dropping
the phase factors
\begin{align}
J_{k,M}(\mu)=\log\biggl(\sum_{N=0}^\infty e^{\mu N}|Z_k(N,N+M)|\biggr),
\label{grandpot}
\end{align}
can be found by fitting the coefficients of the expected instanton
expressions using these exact values.
We have found that they match well with a natural generalization of
the expression for the perturbative sum, the worldsheet instantons and
the bound states of the worldsheet instantons and the membrane
instantons in the ABJM matrix model.
However, the membrane instanton part contains a new kind of
contribution.

Finally, we conjecture that the large chemical potential expansion of
the grand potential is given by
\begin{align}
J_{k,M}(\mu)&=\frac{C_{k}}{3}\mu_{\rm eff}^3+B_{k,M}\mu_{\rm eff}+A_{k}
+\sum_{m=1}^\infty d_{k,M}^{(m)}e^{-4m\mu_{\rm eff}/k}\nonumber\\
&\quad+\sum_{\ell=1}^\infty(-1)^{M\ell}
\biggl(\widetilde b_{k}^{(\ell)}\mu_{\rm eff}
+\widetilde c_{k}^{(\ell)}-\frac{M}{2C_k}e_k^{(\ell)}\biggr)
e^{-2\ell\mu_{\rm eff}}.
\label{largemu}
\end{align}
Here the perturbative coefficients are
\begin{align}
&C_k=\frac{2}{\pi^2k},\quad
B_{k,M}=\frac{1}{3k}+\frac{k}{24}-\frac{M}{2}+\frac{M^2}{2k},\nonumber\\
&A_k=-\frac{1}{6}\log\frac{k}{4\pi}+2\zeta'(-1)
-\frac{\zeta(3)}{8\pi^2}k^2
+\frac{1}{3}\int\frac{dx}{e^{kx}-1}
\Bigl(\frac{3}{x\sinh^2x}-\frac{3}{x^3}+\frac{1}{x}\Bigr),
\end{align}
while the worldsheet instanton coefficients are
\begin{align}
d_{k,M}^{(m)}
=\sum_{g=0}^\infty\sum_{d|m}
\sum_{d_1+d_2=d}
\frac{(-\beta^{-1})^{d_1m/d}(-\beta)^{d_2m/d}n^g_{d_1,d_2}}{m/d}
\Bigl(2\sin\frac{2\pi m}{kd}\Bigr)^{2g-2},
\label{WSd}
\end{align}
with $n^g_{d_1,d_2}$ being the Gopakumar-Vafa invariants of local
${\mathbb P}^1\times{\mathbb P}^1$ and $\beta=e^{-2\pi \ii M/k}$.
Aside from the sign factor $(-1)^{M\ell}$, the membrane instanton
coefficients are the same as in the ABJM case \cite{HMO3,HMMO}
\begin{align}
\widetilde b_k^{(\ell)}&=-\frac{\ell}{2\pi}\sum_{g=0}^\infty
\sum_{d|\ell}\sum_{d_1+d_2=d}
\frac{e^{\ii\pi k\ell(d_1-d_2)/2d}(-1)^g\hat n^g_{d_1,d_2}}{(\ell/d)^2}
\frac{(2\sin\pi k\ell/4d)^{2g}}{\sin\pi k\ell/2d},\nonumber\\
\widetilde c_k^{(\ell)}
&=-k^2\frac{d}{dk}\frac{\widetilde b_k^{(\ell)}}{2\ell k},
\label{bc}
\end{align}
and the bound states are incorporated by
\begin{align}
\mu_{\rm eff}=\mu+\frac{1}{C_k}\sum_{\ell=1}^\infty(-1)^{M\ell}a_k^{(\ell)}e^{-2\ell\mu}.
\end{align}
Note that $\hat n^g_{d_1,d_2}$ in \eqref{bc} is different from
$n^g_{d_1,d_2}$ in \eqref{WSd}.
In terms of the refined topological string invariant
$n^{g_L,g_R}_{d_1,d_2}$, both of them are given as follows \cite{HMMO}:
\begin{align}
n^g_{d_1,d_2}=n^{g,0}_{d_1,d_2},\quad
\hat n^g_{d_1,d_2}=\sum_{g_L+g_R=g}(-1)^gn^{g_L,g_R}_{d_1,d_2}.
\end{align}
It should be noticed that, compared with the ABJM result, our formula
\eqref{largemu} has a non-trivial term multiplied by $e_k^{(\ell)}$,
which is related to $a_k^{(\ell)}$ by
\begin{align}
\sum_{\ell=1}^\infty(-1)^{M\ell}a_k^{(\ell)}e^{-2\ell\mu}
=-\sum_{\ell=1}^\infty(-1)^{M\ell}e_k^{(\ell)}e^{-2\ell\mu_{\rm eff}}.
\end{align}
The coefficients $a_k^{(\ell)}$ and $e_k^{(\ell)}$ are determined from
the quantum mirror map and their explicit form is given in
\cite{HMMO}.
If we restrict ourselves to the case of integral $k$, $a_k^{(\ell)}$
can be read from the following explicit relation between
$\mu_{\rm eff}$ and $\mu$:
\begin{align}
\mu_{\rm eff}=\begin{cases}
\mu-(-1)^{k/2-M}2e^{-2\mu}
{}_4F_3\Bigl(1,1,\frac{3}{2},\frac{3}{2};2,2,2;(-1)^{k/2-M}16e^{-2\mu}\Bigr),&
\mbox{for even $k$},\\
\mu+e^{-4\mu}
{}_4F_3\Bigl(1,1,\frac{3}{2},\frac{3}{2};2,2,2;-16e^{-4\mu}\Bigr),&
\mbox{for odd $k$}.
\end{cases}
\end{align}

The organization of this paper is as follows.
In the next section, we shall first present our Fermi gas formalism
for the partition function and the vacuum expectation values of the
half-BPS Wilson operator.
After giving a consistency check for the conjecture in section
\ref{previousDMP}, we shall proceed to the study of exact and
numerical values of partition function and large chemical potential
expansion of the grand potential using our Fermi gas formalism in
sections \ref{phasedepend} and \ref{grandpotential}.
Finally we conclude this paper by discussing future problems in
section \ref{discussion}.
We present two lemmas in the appendices to support the proof of our
formalism in section \ref{fermigas}.

\section{ABJ fractional brane as ABJM Wilson loop\label{fermigas}}
Let us embark on studying the ABJ matrix model, whose partition
function is given by
\begin{align}
&Z_k(N_1,N_2)
=\frac{(-1)^{\frac{1}{2}N_1(N_1-1)+\frac{1}{2}N_2(N_2-1)}}{N_1!N_2!}
\int\frac{d^{N_1}\mu}{(2\pi)^{N_1}}\frac{d^{N_2}\nu}{(2\pi)^{N_2}}
\nonumber\\
&\qquad\qquad\qquad\times
\left(\frac{\prod_{i<j}2\sinh\frac{\mu_i-\mu_j}{2}
\prod_{a<b}2\sinh\frac{\nu_a-\nu_b}{2}}
{\prod_{i,a}2\cosh\frac{\mu_i-\nu_a}{2}}\right)^2
e^{\frac{\ii k}{4\pi}(\sum_i\mu_i^2-\sum_a\nu_a^2)}.
\label{abjpfdef}
\end{align}
We shall first summarize the main results and prove them in this
section.

If we define the grand partition function by
\begin{align}
\Xi_{k,M}(z)=\sum_{N=0}^\infty z^NZ_k(N,N+M),
\label{grandpf}
\end{align}
it can be expressed in a form very similar to the vacuum expectation
values of the half-BPS Wilson loops in the ABJM matrix model
\cite{HHMO} (see also \cite{BOS,KMSS,GKM}),
\begin{align}
\frac{\Xi_{k,M}(z)}{\Xi_{k,0}(z)}
=\det\bigl(H_{M-p,-M+q-1}(z)\bigr)
_{\begin{subarray}{c} 1 \le p \le M \\ 1 \le q \le M \end{subarray}},
\label{abjpf}
\end{align}
with $H_{p,q}(z)$ defined by
\begin{align}
H_{p,q}(z)=E_{p}(\nu)\circ
\biggl[1+z\p(\nu,\mu)\circ\q(\mu,\nu)\circ\biggr]^{-1}
E_{q}(\nu).
\label{pfcomponent}
\end{align}
Here various quantities
\begin{equation}
\q(\mu,\nu)=\frac{1}{2\cosh\frac{\mu-\nu}{2}},\quad
\p(\nu,\mu)=\frac{1}{2\cosh\frac{\nu-\mu}{2}},\quad
E_j(\nu)=e^{(j+\frac{1}{2})\nu},
\label{eq:f}
\end{equation}
are regarded respectively as matrices or vectors with the indices
$\mu$, $\nu$ and multiplication $\circ$ between them is performed with the
measure
\begin{align}
\int\frac{d\mu}{2\pi}e^{\frac{\ii k}{4\pi}\mu^2},\quad
\int\frac{d\nu}{2\pi}e^{-\frac{\ii k}{4\pi}\nu^2},
\end{align}
as in \cite{HHMO}.

For the vacuum expectation values of the half-BPS Wilson loops in the
ABJ matrix model, we can combine the results of the ABJ partition
function \eqref{abjpf} and the ABJM half-BPS Wilson loop \cite{HHMO}
in a natural way.
As in the ABJM case, the half-BPS Wilson loop in the ABJ matrix model
is characterized by the representation of the supergroup
$U(N_1|N_2)$ whose character is given by the supersymmetric Schur
polynomial
\begin{align}
s_\lambda((e^{\mu_1},\ldots,e^{\mu_{N_1}})/(e^{\nu_1},\ldots,e^{\nu_{N_2}})).
\end{align}
Here $\lambda$ is a partition and we assume that
$\lambda_{N_1+1} \le N_2$ (otherwise, $s_\lambda (x/y)=0$).
The vacuum expectation values are defined by inserting this character
into the partition function
\begin{align}
&\langle s_\lambda\rangle_k(N_1,N_2)
=\frac{(-1)^{\frac{1}{2}N_1(N_1-1)+\frac{1}{2}N_2(N_2-1)}}{N_1!N_2!}
\int\frac{d^{N_1}\mu}{(2\pi)^{N_1}}\frac{d^{N_2}\nu}{(2\pi)^{N_2}}
s_\lambda((e^{\mu_1},\ldots,e^{\mu_{N_1}})/(e^{\nu_1},\ldots,e^{\nu_{N_2}}))
\nonumber\\&\qquad\qquad\qquad\times
\left(\frac{\prod_{i<j}2\sinh\frac{\mu_i-\mu_j}{2}
\prod_{a<b}2\sinh\frac{\nu_a-\nu_b}{2}}
{\prod_{i,a}2\cosh\frac{\mu_i-\nu_a}{2}}\right)^2
e^{\frac{\ii k}{4\pi}(\sum_i\mu_i^2-\sum_a\nu_a^2)}.
\end{align}
Our analysis shows that the grand partition function defined by
\begin{align}
\langle s_\lambda\rangle^{\rm GC}_{k,M}(z)
=\sum_{N=0}^\infty z^N\langle s_\lambda\rangle_k(N,N+M),
\label{grandwl}
\end{align}
is given by
\begin{align}
\frac{\langle s_\lambda\rangle^{\rm GC}_{k,M}(z)}{\Xi_{k,0}(z)}
=\det\Bigl(\bigl(H_{l_p,-M+q-1}(z)\bigr)
_{\begin{subarray}{c}1\le p\le M+r\\1\le q\le M\end{subarray}}\,
\big|\,\bigl(\widetilde H_{l_p,a_q}(z)\bigr)
_{\begin{subarray}{c}1\le p\le M+r\\1\le q\le r\end{subarray}}\Bigr),
\label{abjwl}
\end{align}
where $H_{p,q}(z)$ is the same as that defined in \eqref{pfcomponent}
while $\widetilde H_{p,q}(z)$ is defined by
\begin{align}
\widetilde H_{p,q}(z)=zE_p(\nu)\circ
\bigl[1+z\p(\nu,\mu)\circ\q(\mu,\nu)\circ\bigr]^{-1}
\p(\nu,\mu)\circ E_{q}(\mu).
\end{align}
In \eqref{abjwl}, the arm length $a_q$ and the leg length $l_p$ are
the non-negative integers appearing in the modified Frobenius
notations $(a_1a_2\cdots a_r|l_1l_2\cdots l_{r+M})$ of the Young
diagram $\lambda$.
In the ABJM case, the (ordinary) Frobenius notation
$(a_1a_2\cdots a_r|l_1l_2\cdots l_r)$ of Young diagram
$[\lambda_1\lambda_2\cdots]=[\lambda'_1\lambda'_2\cdots]^{\rm T}$ in the
partition notation was defined by $a_q=\lambda_q-q$,
$l_p=\lambda'_p-p$ with
$r={\rm max}\{s|\lambda_s-s\ge 0\}
={\rm max}\{s|\lambda'_s-s\ge 0\}$
and explained carefully in figure 1 of \cite{HHMO}.
In the ABJ case, we define the modified Frobenius notation 
$(a_1a_2\cdots a_r|l_1l_2\cdots l_{r+M})$
by
\begin{align}
a_q=\lambda_q-q-M,\quad l_p=\lambda'_p-p+M,
\label{FrobeniusSymbol}
\end{align}
with
\begin{align}
r={\rm max}\{s|\lambda_s-s-M\ge 0\}
={\rm max}\{s|\lambda'_s-s+M\ge 0\}-M.
\end{align}
Diagrammatically, the arm length and the leg length are interpreted as
the horizontal and vertical box numbers counted from the shifted
diagonal line.
This is explained further by an example in figure \ref{frobenius}.

Our first observation is the usage of a combination of the Cauchy
determinant formula and the Vandermonde determinant
formula\footnote{We are informed by M.~Honda that this formula already
appeared in \cite{BF}.}
\begin{align}
\frac{\prod_{i<j}^{N_1} (x_i-x_j) \cdot
\prod_{a<b}^{N_2} (y_{a}-y_{b})}
{\prod_{i=1}^{N_1} \prod_{a=1}^{N_2} (x_i +y_{a})} 
= (-1)^{N_1 (N_2-N_1)}
\det \begin{pmatrix}
\tfrac{1}{x_1+y_1} & \cdots & \tfrac{1}{x_1+y_{N_2}} \\
\vdots & \ddots & \vdots \\
\tfrac{1}{x_{N_1}+y_1} & \cdots & \tfrac{1}{x_{N_1}+y_{N_2}} \\
y_1^{N_2-N_1-1} & \cdots & y_{N_2}^{N_2-N_1-1} \\
\vdots & \ddots & \vdots \\
y_1^0 & \dots & y_{N_2}^0 
\end{pmatrix}.
\label{eq:det}
\end{align}
Here on the right hand side, the upper $N_1\times N_2$ submatrix and
the lower $(N_2-N_1)\times N_2$ submatrix are given respectively by
\begin{align}
\biggl(\frac{1}{x_i+y_a}\biggr)
_{\begin{subarray}{c} 1\le i\le N_1\\1\le a\le N_2\end{subarray}},\quad
\bigl(y_a^{N_2-N_1-p}\bigr)
_{\begin{subarray}{c} 1\le p\le N_2-N_1\\1\le a\le N_2\end{subarray}}.
\end{align}
The determinantal formula \eqref{eq:det} can be proved without
difficulty by considering the $N_2\times N_2$ Cauchy determinant and
sending the extra $N_2-N_1$ pieces of $x_i$ to infinity.

Here comes the main idea of our computation.
Without the extra monomials $y_a^{N_2-N_1-p}$, as emphasized in
\cite{MP,HHMO}, the partition function can be rewritten into traces of
powers of the density matrices.
In the study of the ABJM half-BPS Wilson loop \cite{HHMO}, the
monomials of the Wilson loop insertion play the role of the endpoints
in this multiplication of the density matrices.
This can be interpreted as follows:
The partition function is expressed by ``closed strings'' of the
density matrix while the Wilson loops are expressed by ``open
strings''.
This implies that the ABJ partition function, after rewritten by using
\eqref{eq:det}, can also be expressed by powers of the density
matrices with monomials $y_a^{N_2-N_1-p}$ in the both ends, similarly
to the case of the ABJM Wilson loop.
The only problem is to count the combinatorial factors correctly.

We can also prove this relation by counting the combinatorial factors
explicitly.
However, it is easier to present the proof by using various
determinantal formulas.
In the following subsections we shall provide proofs for the results
\eqref{abjpf} and \eqref{abjwl} in this way.
Readers who are not interested in the details of the proofs can accept
the results and jump to section \ref{previousDMP}.

\begin{figure}[tb]
\begin{center}
\includegraphics[trim=0 100 200 100,scale=0.5]{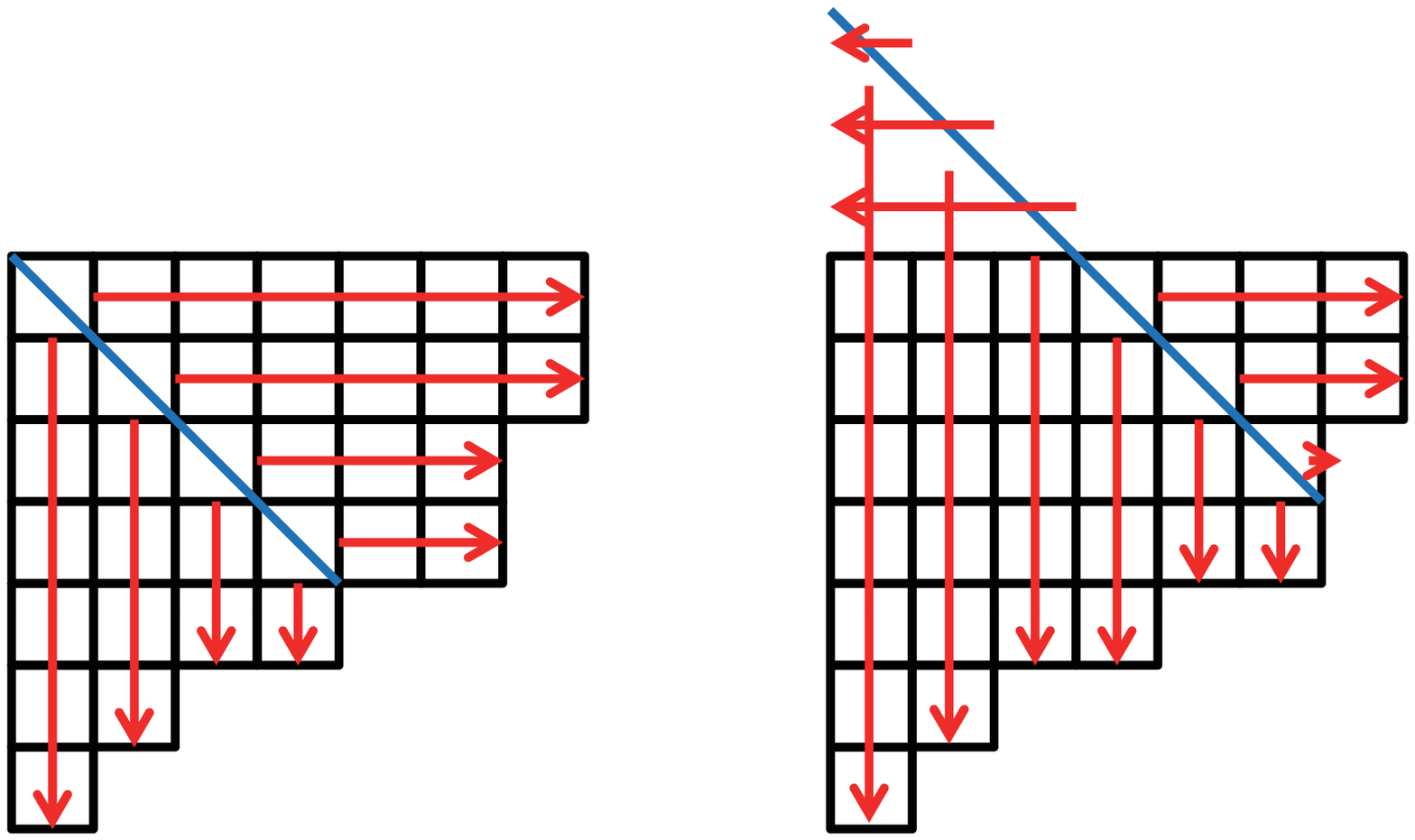}\\
(a)\hspace{5.5cm}(b)
\end{center}
\caption{
Frobenius notation for the ABJM case (a) and for the ABJ case (b).
The same Young diagram
$[\lambda_1\lambda_2\lambda_3\lambda_4\lambda_5
\lambda_6\lambda_7]=[7766421]$
or
$[\lambda'_1\lambda'_2\lambda'_3\lambda'_4\lambda'_5
\lambda'_6\lambda'_7]= [7655442]$
is expressed as $(a_1a_2a_3a_4|l_1l_2l_3l_4)=(6532|6421)$
in the ABJM case while $(a_1a_2a_3|l_1l_2l_3l_4l_5l_6)=(320|975421)$
in the ABJ case $(M=3)$.
It is also convenient to regard the first three horizontal arrows in
(b) as additional arm lengths $(-1,-2,-3)$.
}
\label{frobenius}
\end{figure}

\subsection{Proof of the formula for the partition function}

In this subsection, we shall present a proof for \eqref{abjpf}.
Let us plug $x_i=e^{\mu_i}$ and $y_{a}= e^{\nu_{a}}$ or
$x_i=e^{-\mu_i}$ and $y_a=e^{-\nu_a}$ into \eqref{eq:det}.
Multiplying these two equations side by side, we find
\begin{align}
&(-1)^{\frac{1}{2}N_1(N_1-1)+\frac{1}{2}N_2(N_2-1)}
\left(\frac{\prod_{i<j}2\sinh\frac{\mu_i-\mu_j}{2}
\cdot\prod_{a<b}2\sinh\frac{\nu_a-\nu_b}{2}}
{\prod_{i,a}2\cosh\frac{\mu_i-\nu_a}{2}}\right)^2\nonumber\\
&=\det\begin{pmatrix}
(\q(\mu_i,\nu_j))_{\begin{subarray}{c} 1 \le i \le N_1 \\
1 \le j \le N_2 \end{subarray}} \\
(\e_{M-p}(\nu_j))_{\begin{subarray}{c} 1 \le p \le M \\
1 \le j \le N_2 \end{subarray}}
\end{pmatrix}
\det\begin{pmatrix}
(\p(\nu_j,\mu_i))_{\begin{subarray}{c} 1 \le i \le N_1 \\
1 \le j \le N_2 \end{subarray}} \\
(\e_{-M+p-1}(\nu_j))_{\begin{subarray}{c} 1 \le p \le M \\
1 \le j \le N_2 \end{subarray}}
\end{pmatrix},
\label{eq:two_det}
\end{align}
where $\p$, $\q$ and $\e$ are defined in \eqref{eq:f}.
In order to evaluate the integration of the product \eqref{eq:two_det}
of two $N_2 \times N_2$ determinants, we apply the formula
\eqref{detformula} with $r=0$.
Then we obtain
\begin{align}
& (-1)^{\frac{1}{2}N_1(N_1-1)+\frac{1}{2}N_2(N_2-1)} \frac{1}{N_2 !}
\int \prod_{a=1}^{N_2} \frac{d \nu_a}{2\pi} \
\left( \frac{\prod_{i<j} 2\sinh \frac{\mu_i-\mu_j}{2}
\cdot \prod_{a<b} 2\sinh \frac{\nu_a-\nu_b}{2}}
{\prod_{i,a} 2 \cosh \frac{\mu_i-\nu_a}{2}} \right)^2 
e^{-\frac{\ii k}{4 \pi} \sum_a \nu_a^2} \nonumber \\
&=\det\begin{pmatrix}
((\q \circ \p)(\mu_i,\mu_j))_{1 \le i,j \le N_1} &
((\q \circ \e_{-M+q-1})(\mu_i))_{\begin{subarray}{c}
1 \le i \le N_1 \\ 1 \le q \le M \end{subarray}} \\
((\e_{M-p} \circ \p)(\mu_j))_{\begin{subarray}{c}
1 \le p \le M \\ 1 \le j \le N_1 \end{subarray}} &
(\e_{M-p} \circ \e_{-M+q-1})_{1 \le p,q \le M} 
\end{pmatrix},
\end{align}
where the explicit expression for each component in the determinant is
given by 
\begin{align}
(\q\circ\p)(\mu,\mu')
&=\int\frac{d\nu}{2\pi}\q(\mu,\nu)\p(\nu,\mu')e^{-\frac{\ii k}{4 \pi}\nu^2},&
(\q\circ\e_q)(\mu)
&=\int\frac{d\nu}{2\pi}\q(\mu,\nu)\e_q(\nu)e^{-\frac{\ii k}{4 \pi}\nu^2}, 
\nonumber\\
(\e_p\circ\p)(\mu)
&=\int\frac{d\nu}{2\pi}\e_p(\nu)\p(\nu,\mu)e^{-\frac{\ii k}{4 \pi}\nu^2},&
\e_p\circ\e_q
&=\int\frac{d\nu}{2\pi}\e_p(\nu)\e_q(\nu)e^{-\frac{\ii k}{4 \pi}\nu^2}.
\label{eq:def_ABCD}
\end{align}
Therefore the grand partition function \eqref{grandpf} becomes
\begin{align}
\Xi_{k,M}(z)
&=\sum_{N=0}^\infty 
\frac{z^N}{N!}
\int \prod_{i=1}^N e^{\frac{\ii k}{4 \pi}\mu_i^2}\frac{d \mu_i}{2 \pi}\det
\begin{pmatrix}
((\q\circ\p)(\mu_i,\mu_j))_{N \times N} & 
((\q\circ\e_{-M+q-1})(\mu_i))_{N \times M} \\
((\e_{M-p}\circ\p)(\mu_j))_{M \times N} & 
(\e_{M-p}\circ\e_{-M+q-1})_{M \times M}
\end{pmatrix},
\end{align}
which can be expressed as the Fredholm determinant $\Det$ of the form
\begin{align}
\Xi_{k,M}(z)=\Det\begin{pmatrix}\mathbf{1}+z\q\circ\p&z\q\circ\e\\
\e\circ\p&\e\circ\e\end{pmatrix},
\end{align}
by appendix \ref{fredholm}.
Using the formula
\begin{align}
\Det\begin{pmatrix}A&B\\C&D\end{pmatrix}=\Det A\cdot\Det(D-CA^{-1}B),
\label{inverse}
\end{align}
and simplifying the components by
\begin{align}
\e_p\circ\e_q-z\e_p\circ\p\circ\bigl[1+z\q\circ\p\circ\bigr]^{-1}\q\circ\e_q
=\e_p\circ\bigl[1+z\p\circ\q\circ\bigr]^{-1}\e_q,
\label{eq:symplification}
\end{align}
we finally arrive at \eqref{abjpf}.

\subsection{Proof of the formula for the half-BPS Wilson loop}

In this subsection we shall present a proof for \eqref{abjwl}.
The discussion is parallel to that of the previous subsection.
From the formula due to Moens and Van der Jeugt \cite{MvdJ}, we have
\begin{align}
&s_\lambda((e^{\mu_1},\dots,e^{\mu_{N_1}})/(e^{\nu_1},\dots,e^{\nu_{N_2}}))
\nonumber\\
&=(-1)^{r}\det\begin{pmatrix}
(\q(\mu_i,\nu_j))
_{\begin{subarray}{c} 1 \le i \le N_1 \\ 1 \le j \le N_2 \end{subarray}}
&(\e_{a_q}(\mu_i))
_{\begin{subarray}{c} 1 \le i \le N_1 \\ 1 \le q \le r \end{subarray}} \\
(\e_{l_p}(\nu_j))
_{\begin{subarray}{c} 1 \le p \le M+r \\1 \le j \le N_2 \end{subarray}}
&(0)_{(M+r) \times r}
\end{pmatrix}\bigg/
\det\begin{pmatrix}
(\q(\mu_i,\nu_j))
_{\begin{subarray}{c} 1 \le i \le N_1 \\ 1 \le j \le N_2 \end{subarray}} \\
(\e_{M-p}(\nu_j))
_{\begin{subarray}{c} 1 \le p \le M \\1 \le j \le N_2 \end{subarray}}
\end{pmatrix},
\end{align}
where $(a_1a_2\cdots a_r|l_1l_2\cdots l_{M+r})$
is the modified Frobenius notation of $\lambda$ given in
\eqref{FrobeniusSymbol}.
Combining this determinantal expression with \eqref{eq:two_det},
we have
\begin{align}
&(-1)^{\frac{1}{2}N_1(N_1-1)+\frac{1}{2}N_2(N_2-1)}
s_\lambda ((e^{\mu_1},\dots,e^{\mu_{N_1}})/(e^{\nu_1},\dots,e^{\nu_{N_2}})) \nonumber\\
& \hspace{6cm}\times 
\left(\frac{\prod_{i<j}2\sinh \frac{\mu_i-\mu_j}{2}\cdot
\prod_{a<b}2\sinh\frac{\nu_a-\nu_b}{2}}
{\prod_{i,a}2\cosh \frac{\mu_i-\nu_a}{2}}\right)^2\nonumber\\
&= (-1)^{r}\det \begin{pmatrix}
(\q(\mu_i,\nu_j))
_{\begin{subarray}{c} 1 \le i \le N_1 \\ 1 \le j \le N_2 \end{subarray}}
&(\e_{a_q}(\mu_i))
_{\begin{subarray}{c} 1 \le i \le N_1 \\ 1 \le q \le r \end{subarray}} \\
(\e_{l_p}(\nu_j))
_{\begin{subarray}{c} 1 \le p \le M+r \\1 \le j \le N_2 \end{subarray}}
&(0)_{(M+r) \times r}
\end{pmatrix}
\det\begin{pmatrix}
(\p(\nu_j,\mu_i))_{\begin{subarray}{c} 1 \le i \le N_1 \\
1 \le j \le N_2 \end{subarray}} \\
(\e_{-M+p-1}(\nu_j))_{\begin{subarray}{c} 1 \le p \le M \\
1 \le j \le N_2 \end{subarray}}
\end{pmatrix}.
\end{align}
Integrating this with the formula \eqref{detformula}, we see that
\begin{align}
&(-1)^{\frac{1}{2}N_1(N_1-1)+\frac{1}{2}N_2(N_2-1)}\frac{1}{N_2!}
\int \prod_{a=1}^{N_2} e^{-\frac{\ii k}{4 \pi} \nu_a^2} \frac{d \nu_a}{2\pi}
\left(\frac{\prod_{i<j}2\sinh\frac{\mu_i-\mu_j}{2}
\cdot\prod_{a<b}2\sinh\frac{\nu_a-\nu_b}{2}}
{\prod_{i,a}2\cosh\frac{\mu_i-\nu_a}{2}}\right)^2\nonumber\\
&\hspace{6cm}\times
s_\lambda ((e^{\mu_1},\dots,e^{\mu_{N_1}})/(e^{\nu_1},\dots,e^{\nu_{N_2}}))
\nonumber\\
&=(-1)^r\det\begin{pmatrix}
((\q\circ\p)(\mu_i,\mu_j))
_{\begin{subarray}{c} 1 \le i \le N_1 \\ 1 \le j \le N_1 \end{subarray}}
&((\q\circ\e_{-M+q-1})(\mu_i))
_{\begin{subarray}{c} 1 \le i \le N_1 \\ 1 \le q \le M \end{subarray}} 
&(\e_{a_q}(\mu_i))
_{\begin{subarray}{c} 1 \le i \le N_1 \\ 1 \le q \le r \end{subarray}}
\\
((\e_{l_p}\circ\p)(\mu_j))
_{\begin{subarray}{c} 1 \le p \le M+r \\ 1 \le j \le N_1 \end{subarray}}
& 
(\e_{l_p}\circ\e_{-M+q-1})
_{\begin{subarray}{c} 1 \le p \le M+r \\ 1 \le q \le M \end{subarray}}
& (0)_{(M+r) \times r}
\end{pmatrix}.
\end{align}
Now the definition \eqref{grandwl} of 
$\langle s_\lambda\rangle_{k,M}^{\rm GC}(z)$ and appendix
\ref{fredholm} give
\begin{align}
\langle s_\lambda\rangle_{k,M}^{\rm GC}(z)
&=(-1)^{r}\sum_{N=0}^\infty\frac{z^N}{N!}
\prod_{i=1}^N e^{\frac{\ii k}{4\pi} \mu_i^2} \frac{d \mu_i}{2\pi} 
\nonumber\\ 
&\quad\times\det\begin{pmatrix}
((\q\circ\p)(\mu_i,\mu_j))_{N \times N} & 
(\q\circ\e_{-M+q-1}(\mu_i))_{N \times M} &
(\e_{a_q}(\mu_i))_{N \times r} \\
((\e_{l_p}\circ\p)(\mu_j))_{(M+r) \times N} & 
(\e_{l_p}\circ\e_{-M+q-1})_{(M+r) \times M} & 
(0)_{(M+r) \times r}
\end{pmatrix} \nonumber\\
&= (-1)^{r}
\Det \begin{pmatrix}
1+z \q\circ\p & z\q\circ\e & z\e_a \\
\e_l\circ\p & \e_l\circ\e & 0 
\end{pmatrix}.
\end{align}
Finally, using \eqref{inverse} and \eqref{eq:symplification}, we find 
\begin{align}
&\frac{\langle s_\lambda\rangle_{k,M}^{\rm GC}(z)}
{\Xi_{k,0}(z)} \nonumber\\&
=(-1)^{r} \Det \left[
\e_l\circ\e-\e_l\circ\p\circ(1+z\q\circ\p\circ)^{-1}z\q\circ\e 
\ \bigm| \ 
-\e_l\circ\p\circ(1+z\q\circ\p\circ)^{-1}z\e_a\right]  \nonumber\\
&= \Det \left[
\e_l\circ(1+z\p\circ\q\circ)^{-1}\e\ \bigm| \ 
z\e_l\circ(1+z\p\circ\q\circ)^{-1}\p\circ\e_a \right], 
\end{align}
which is the desired formula \eqref{abjwl}.
In the last determinant, the rows are determined by modified legs
$l_1,l_2,\dots,l_{M+r}$,
whereas the columns are determined by
$(-M,\dots,-2,-1)$ and modified arms 
$a_1,a_2,\dots,a_r$.

\section{Consistency with the previous works}\label{previousDMP}
In the subsequent sections, we shall use our Fermi gas formalism
\eqref{abjpf} to evaluate several values of the partition function and
proceed to confirm our conjecture of the grand potential in
\eqref{largemu}.
However, obviously only the values of the partition function at several
coupling constants are not enough to fix the whole large $\mu$
expansion in \eqref{largemu}.
Hence, before starting our numerical studies, we shall first pause to
study the consistency between our conjecture of the perturbative part
and the worldsheet instanton part in \eqref{largemu} with the
corresponding parts in the 't Hooft expansion \cite{DMP1}.
After fixing the worldsheet instanton contribution, we easily see that
it diverges at some coupling constants.
As in the case of the ABJM matrix model \cite{HMO2}, since the matrix
model is finite for any $(k,M)$ satisfying $0\le M\le k$ (at least
$0\le M\le k/2$, as we shall see in the next section), the divergences
in the worldsheet instantons have to be cancelled by the membrane
instantons and their bound states.
We shall see that, for this cancellation mechanism to work for
$d_{k,M}^{(m)}$, we need to introduce the phase $(-1)^{M\ell}$ for
$\widetilde b_k^{(\ell)}$ and $\widetilde c_k^{(\ell)}$ in
\eqref{largemu}.\footnote{The contents of this section are based on a
note of Sa.Mo.\ during the collaboration of \cite{HMMO}.
Sa.Mo.\ is grateful to the collaborators for various discussions.}

\subsection{Perturbative sum}
The perturbative part of the grand potential in \eqref{largemu}
implies that the perturbative sum of the partition function reads
\begin{align}
Z^{\rm pert}_k(N,N+M)=e^{A_{k}}C_{k}^{-1/3}\Ai[C_{k}^{-1/3}(N-B_{k,M})].
\label{pert}
\end{align}
The argument of the Airy function is proportional to
\begin{align}
\frac{N-B_{k,M}}{k}=\hat\lambda-\frac{1}{3k^2}.
\end{align}
It was noted in \cite{AHHO,DMP1} that the renormalized 't Hooft
coupling constant
\begin{align}
\hat\lambda=\frac{N}{k}-\frac{1}{24},
\end{align}
in the ABJM case has to be modified to
\begin{align}
\hat\lambda=\frac{N_1+N_2}{2k}-\frac{(N_1-N_2)^2}{2k^2}-\frac{1}{24},
\end{align}
in the ABJ case.
We have changed $B_{k,0}$ into $B_{k,M}$ to take care of this
modification.

\subsection{Worldsheet instanton}
Let us see the validity of our conjecture on the worldsheet instanton
$d^{(m)}_{k,M}$.
First note that the worldsheet instanton can be summarized into a
multi-covering formula
\begin{align}
J^{\rm WS}(\mu)=\sum_{g=0}^\infty\sum_{n,d_1,d_2}n^g_{d_1,d_2}
\biggl(2\sin\frac{2\pi n}{k}\biggr)^{2g-2}
\frac{(-e^{-\frac{4\mu}{k}}\beta^{-1})^{nd_1}
(-e^{-\frac{4\mu}{k}}\beta)^{nd_2}}{n}.
\label{kahler}
\end{align}
This naturally corresponds to shifting the two Kahler parameters by
$\pm 2\pi\ii M/k$.

Next, we shall see that the expression of the worldsheet instanton
\eqref{WSd} reproduces the genus-0 free energy of the matrix model
\cite{DMP1}.
As in \cite{HMO2}, the first few worldsheet instanton terms of the
free energy $F_{k,M}=\log Z_{k,M}$ with abbreviation
$Z_{k,M}=Z_k(N,N+M)$ are given by
\begin{align}
F^{{\rm WS}(1)}_{k,M}&=Z^{{\rm WS}(1)}_{k,M},\nonumber\\
F^{{\rm WS}(2)}_{k,M}&=Z^{{\rm WS}(2)}_{k,M}-\frac{1}{2}(Z^{{\rm WS}(1)}_{k,M})^2,
\end{align}
where the partition functions are
\begin{align}
Z^{{\rm WS}(1)}_{k,M}&=d^{(1)}_{k,M}
\frac{\Ai[C_{k}^{-1/3}(N+\frac{4}{k}-B_{k,M})]}
{\Ai[C_{k}^{-1/3}(N-B_{k,M})]},\nonumber\\
Z^{{\rm WS}(2)}_{k,M}&=\biggl(d^{(2)}_{k,M}+\frac{(d^{(1)}_{k,M})^2}{2}\biggr)
\frac{\Ai[C_{k}^{-1/3}(N+\frac{8}{k}-B_{k,M})]}
{\Ai[C_{k}^{-1/3}(N-B_{k,M})]},
\end{align}
and we have assumed that the worldsheet instantons are given by
\eqref{WSd},
\begin{align}
d^{(1)}_{k,M}
&=-\frac{n^0_{10}\beta^{-1}+n^0_{01}\beta}{4\sin^2\frac{2\pi}{k}},
\nonumber\\
d^{(2)}_{k,M}
&=\frac{n^0_{10}\beta^{-2}+n^0_{01}\beta^{2}}{8\sin^2\frac{4\pi}{k}}
+\frac{n^0_{20}\beta^{-2}+n^0_{11}+n^0_{02}\beta^{2}}{4\sin^2\frac{2\pi}{k}}.
\label{dkM}
\end{align}

From the asymptotic form of the Airy function
\begin{align}
\Ai[z]=\frac{e^{-\frac{2}{3}z^{3/2}}}{2\sqrt{\pi}z^{1/4}}
\biggl(1-\frac{5}{48}z^{-3/2}+{\mathcal O}(z^{-3})\biggr),
\end{align}
we find
\begin{align}
&\frac{\Ai[C_{k}^{-1/3}(N+\frac{4m}{k}-B_{k,M})]}
{\Ai[C_{k}^{-1/3}(N-B_{k,M})]}
=e^{-2\pi\sqrt{2\hat\lambda}m}
\biggl(1-\frac{2\sqrt{2}\pi m(m-\frac{1}{6})}{k^2\sqrt{\hat\lambda}}
-\frac{m}{k^2\hat\lambda}+{\mathcal O}(k^{-4})\biggr).
\end{align}
Hence, the free energy is given by
\begin{align}
F^{{\rm WS}(1)}_{k,M}
&=e^{-2\pi\sqrt{2\hat\lambda}}\biggl[g_s^{-2}\frac{1}{4}(n^0_{10}\beta^{-1}+n^0_{01}\beta)
+{\mathcal O}(g_s^0)\biggr],\nonumber\\
F^{{\rm WS}(2)}_{k,M}
&=e^{-4\pi\sqrt{2\hat\lambda}}\biggl[
g_s^{-2}\biggl(-\frac{1}{32}(n^0_{10}\beta^{-2}+n^0_{01}\beta^{2})
-\frac{1}{4}(n^0_{20}\beta^{-2}+n^0_{11}+n^0_{02}\beta^{2})
\nonumber\\
&\qquad\qquad+\frac{1}{16}(n^0_{10}\beta^{-1}+n^0_{01}\beta)^2x\biggr)
+{\mathcal O}(g_s^0)\biggr],
\end{align}
with $x=1/(\pi\sqrt{2\hat\lambda})$.

After plugging the Gopakumar-Vafa invariants \cite{GV,AMV},
\begin{align}
n^0_{10}=n^0_{01}=-2,\quad n^0_{20}=n^0_{02}=0,\quad n^0_{11}=-4,
\end{align}
this reproduces the genus-0 free energy
\begin{align}
F_{g=0}&=\frac{4\pi^3\sqrt{2}}{3}\hat\lambda^{3/2}
+\frac{2\pi^3 \ii}{3}\biggl(\frac{M}{k}\biggr)^3+{\rm const}\nonumber\\
&\hspace{-1cm}-\frac{1}{2}(\beta+\beta^{-1})e^{-2\pi\sqrt{2\hat\lambda}}
+\biggl(\frac{1}{16}(\beta^2+16+\beta^{-2})
+\frac{x}{4}(\beta+\beta^{-1})^2\biggr)e^{-4\pi\sqrt{2\hat\lambda}}
+{\mathcal O}(e^{-6\pi\sqrt{2\hat\lambda}}),
\end{align}
which was found in subsection 5.3 of \cite{DMP1}.

\subsection{Cancellation mechanism}\label{polecancel}
In the preceding subsections, we have presented a consistency check
with previous studies for the perturbative part and the worldsheet
instanton part of our conjecture \eqref{largemu}.
Note that these worldsheet instantons contain divergences at certain
coupling constants.
(See \eqref{dkM}.)
As in the case of the ABJM matrix model \cite{HMO2}, since there
should be no divergences in the matrix integration for $0\le M\le k$,
the divergences have to be cancelled by the membrane instantons and
the bound states.
Corresponding to the extra phases from $\beta^{\pm 1}$ in
$d_{k,M}^{(m)}$, we have found that the singularity of the worldsheet
instanton \eqref{WSd} is cancelled if we introduce the extra sign
factor $(-1)^{M\ell}$ in the membrane instantons.
Namely, we have checked that the singularity in
\begin{align}
d_{k,M}^{(m)}e^{-4m\mu_{\rm eff}/k}
+(-1)^{M\ell}\bigl(\widetilde b_{k}^{(\ell)}\mu_{\rm eff}
+\widetilde c_{k}^{(\ell)}\bigr)e^{-2\ell\mu_{\rm eff}},
\end{align}
at $k=2m/\ell$ is canceled for several values.
The extra sign factor $(-1)^{M\ell}$ can also be understood by the
shift of the Kahler parameters in the ABJ matrix model as pointed out
below \eqref{kahler}.

\section{Phase factor}\label{phasedepend}
After the consistency check of the perturbative sum, the worldsheet
instantons and the cancellation mechanism in the previous section,
let us start to compute the grand partition function $\Xi_{k,M}(z)$ in
\eqref{abjpf}.
Since the grand partition function $\Xi_{k,0}(z)$ of the ABJM matrix
model was studied carefully in our previous paper \cite{HMO2}, we
shall focus on the computation of the components of the matrix
\eqref{pfcomponent}.
After expanding in $z$, we find
\begin{align}
H_{m,n}(z)=\sum_{N=0}^\infty(-z)^NH_{m,n}^{(N)},
\end{align}
where each term $H_{m,n}^{(N)}$ is simply given by a $2N+1$ multiple
integration.

For $N=0$ we easily find ($\hbar=2\pi k$)
\begin{align}
H_{m,n}^{(0)}
=\int\frac{dy}{\hbar}
e^{\frac{2\pi}{\hbar}(m+\frac{1}{2})y}e^{-\frac{\ii}{2\hbar}y^2}e^{-\frac{2\pi}{\hbar}(n+\frac{1}{2})y}
=\frac{e^{-\frac{\pi \ii}{4}}}{\sqrt{k}}
e^{-\frac{2\pi \ii}{2k}(m-n)^2},
\end{align}
while for $N\ne 0$ we find
\begin{align}
H_{m,n}^{(N)}
&=\int\frac{dy_0}{\hbar}\frac{dx_1}{\hbar}\frac{dy_1}{\hbar}
\cdots\frac{dx_{N}}{\hbar}\frac{dy_{N}}{\hbar}
e^{\frac{2\pi}{\hbar}(m+\frac{1}{2})y_0}e^{-\frac{\ii}{2\hbar}y_0^2}
\frac{1}{2\cosh\frac{y_0-x_1}{2k}}e^{\frac{\ii}{2\hbar}x_1^2}
\frac{1}{2\cosh\frac{x_1-y_1}{2k}}
\nonumber\\
&\quad\times e^{-\frac{\ii}{2\hbar}y_1^2}
\cdots
\frac{1}{2\cosh\frac{y_{N-1}-x_N}{2k}}
e^{\frac{\ii}{2\hbar}x_N^2}
\frac{1}{2\cosh\frac{x_N-y_N}{2k}}
e^{-\frac{\ii}{2\hbar}y_{N}^2}e^{-\frac{2\pi}{\hbar}(n+\frac{1}{2})y_{N}}.
\label{HNdef}
\end{align}
Introducing the Fourier transformation,
\begin{align}
\frac{1}{2\cosh\frac{y_{i-1}-x_i}{2k}}
&=\int\frac{dp_i}{2\pi}\frac{e^{-\ii p_i(y_{i-1}-x_i)/\hbar}}{2\cosh\frac{p_i}{2}},
\nonumber\\
\frac{1}{2\cosh\frac{x_i-y_{i}}{2k}}
&=\int\frac{dq_i}{2\pi}\frac{e^{-\ii q_i(x_i-y_{i})/\hbar}}{2\cosh\frac{q_i}{2}},
\end{align}
and integrating over $y_1,x_1,\cdots,y_{N+1}$, we find
\begin{align}
H_{m,n}^{(N)}&=\frac{e^{-\frac{\pi \ii}{4}}}{\sqrt{k}}
e^{-\frac{2\pi \ii}{2k}(m+\frac{1}{2})^2}
e^{-\frac{2\pi \ii}{2k}(n+\frac{1}{2})^2}
\int\frac{dp_1dq_1}{2\pi\hbar}\cdots\frac{dp_Ndq_N}{2\pi\hbar}\nonumber\\
&\quad 
e^{-\frac{1}{\hbar}2\pi(m+\frac{1}{2})p_1}\frac{1}{2\cosh\frac{p_1}{2}}e^{\frac{\ii}{\hbar}p_1q_1}
\frac{1}{2\cosh\frac{q_1}{2}}e^{-\frac{\ii}{\hbar}q_1p_2}\cdots
e^{\frac{\ii}{\hbar}p_Nq_N}\frac{1}{2\cosh\frac{q_N}{2}}
e^{-\frac{1}{\hbar}2\pi(n+\frac{1}{2})q_N}.
\end{align}
Using further the formulas
\begin{align}
\int\frac{dp_1}{2\pi}
e^{-\frac{1}{\hbar}2\pi(m+\frac{1}{2})p_1}\frac{1}{2\cosh\frac{p_1}{2}}e^{\frac{\ii}{\hbar}p_1q_1}
&=\frac{1}{2\cosh\frac{q_1+2\pi \ii(m+\frac{1}{2})}{2k}},\nonumber\\
\int\frac{dp_i}{2\pi}
\frac{e^{\frac{\ii}{\hbar}p_i(q_i-q_{i-1})}}{2\cosh\frac{p_i}{2}}
&=\frac{1}{2\cosh\frac{q_{i-1}-q_i}{2k}},
\quad (i=2,3,\cdots,N-1)
\label{intamb}
\end{align}
to carry out the $p$-integrations, we finally arrive at the
expression
\begin{align}
H_{m,n}^{(N)}&=\frac{e^{-\frac{\pi \ii}{4}}}{\sqrt{k}}
e^{-\frac{2\pi \ii}{2k}(m+\frac{1}{2})^2}
e^{-\frac{2\pi \ii}{2k}(n+\frac{1}{2})^2}
\int\frac{dq_1}{\hbar}\frac{dq_2}{\hbar}\cdots\frac{dq_N}{\hbar}
\frac{1}{2\cosh\frac{q_1+2\pi \ii(m+\frac{1}{2})}{2k}}
\nonumber\\&\quad\times
\frac{1}{2\cosh\frac{q_1}{2}}
\frac{1}{2\cosh\frac{q_1-q_2}{2k}}
\frac{1}{2\cosh\frac{q_2}{2}}
\cdots
\frac{1}{2\cosh\frac{q_{N-1}-q_N}{2k}}
\frac{1}{2\cosh\frac{q_N}{2}}
e^{-\frac{1}{k}(n+\frac{1}{2})q_N}.
\end{align}

As in the case of the Wilson loops, we can express $H_{m,n}^{(N)}$
($N\ne 0$) as
\begin{align}
H_{m,n}^{(N)}=\frac{e^{-\frac{\pi \ii}{4}}}{\sqrt{k}}
e^{-\frac{2\pi \ii}{2k}(m+\frac{1}{2})^2}
e^{-\frac{2\pi \ii}{2k}(n+\frac{1}{2})^2}
\int\frac{dx}{\hbar}
\frac{1}{2\cosh\frac{x+2\pi \ii(m+\frac{1}{2})}{2k}}
\frac{1}{2\cosh\frac{x}{2}}
\phi_n^{(N-1)}(x),
\label{polescrossing}
\end{align}
where the functions $\phi_n^{(N)}(x)$ are defined by
\begin{align}
\phi_n^{(N)}(x)=\sqrt{2\cosh\frac{x}{2}}
\int\frac{dy}{\hbar}\rho^{N}(x,y)
\frac{e^{-\frac{1}{k}(n+\frac{1}{2})y}}{\sqrt{2\cosh\frac{y}{2}}},
\label{rhoN}
\end{align}
with
\begin{align}
\rho(x,y)=\frac{1}{\sqrt{2\cosh\frac{x}{2}}}
\frac{1}{2\cosh\frac{x-y}{2k}}\frac{1}{\sqrt{2\cosh\frac{y}{2}}}.
\end{align}
In \eqref{rhoN}, the multiplication among the density matrices
$\rho(x,y)$ is defined with a measure $1/\hbar$,
\begin{align}
\rho^N(x,y)=\int\frac{dz}{\hbar}\,\rho(x,z)\,\rho^{N-1}(z,y).
\end{align}
The functions $\phi_n^{(N)}(x)$ can be determined recursively by
\begin{align}
\phi_n^{(N)}(x)=\sqrt{2\cosh\frac{x}{2}}
\int\frac{dy}{\hbar}\rho(x,y)
\frac{\phi_n^{(N-1)}(y)}{\sqrt{2\cosh\frac{y}{2}}},
\end{align}
with the initial condition $\phi_n^{(0)}(x)=e^{-\frac{1}{k}(n+\frac{1}{2})x}$.

Note that, in \eqref{polescrossing}, the function
$1/\cosh\frac{x+2\pi\ii(m+\frac{1}{2})}{2k}$ has poles aligning on the
imaginary axis.
The pole with the smallest positive imaginary part is at
$x=\pi\ii(k-2(m+\frac{1}{2}))$ for $M$ in the range $0\le M<(k+1)/2$
since $m$ runs from $0$ to $M-1$.
Hence, for $M$ in this range, the relative position between the pole
and the real axis is the same as the ABJM case $M=0$ and we can trust
the formula \eqref{polescrossing} literally.
However, for $(k+1)/2\le M\le k$ the above pole comes across the real
axis and we need to deform the integration contour of
\eqref{polescrossing}, which is originally along the real axis, to the
negative imaginary direction.
This phenomenon and the contour prescription rule were already pointed
out in \cite{AHS}.
In their work, they proposed this prescription by requiring the
continuity at $M=(k+1)/2$ and the Seiberg duality.
They also checked that this prescription gives the correct values of
the partition function \eqref{abjpfdef} for small $N$ and $k$.
Our above analysis further pins down the origin of this deformation of
the integration contour.
The deformation comes from changing the integration variables from
\eqref{HNdef} to \eqref{polescrossing}.
For simplicity, hereafter, we shall often refer to the validity range
as $0\le M\le k/2$ instead of $0\le M<(k+1)/2$.

\subsection{Phase factor}
Unlike the case of the Wilson loops, the complex phase factor looks
very non-trivial and needs to be studied separately.
Using our Fermi gas formalism \eqref{abjpf}, we have found from
numerical studies that the phase factor is given by a rather simple
formula:
\begin{align}
\frac{1}{2\pi}\arg Z_k(N,N+M)=\frac{1}{8}M(M-2)
+\frac{1}{4}MN-\frac{1}{12k}(M^3-M).
\label{phase}
\end{align}
We have checked this formula numerically for $N=0,1,2,3$.
The results are depicted in figure \ref{fig:phase}.
As noted in the above paragraph, our numerical studies are valid not
only for $0\le M\le k/2$ but also slightly beyond $k/2$;
$0<M<(k+1)/2$.
In fact, we believe that our phase formula \eqref{phase} is valid for
the whole region of $0\le M\le k$ because we can show that this phase
reproduces a phase factor appearing in the Seiberg duality
\begin{align}
\frac{1}{2\pi}\arg\frac{Z_k(N,N+M)}{[Z_k(N,N+k-M)]^*}
=\frac{k^2}{24}+\frac{1}{12}+\frac{k(N-1)}{4}
\end{align}
as was conjectured in \cite{KWYphase} and further interpreted as a
contact term anomaly in \cite{CDFKS}. 

\begin{figure}[htb]
\begin{center}
\begin{tabular}{cc}
\resizebox{75mm}{!}{\includegraphics{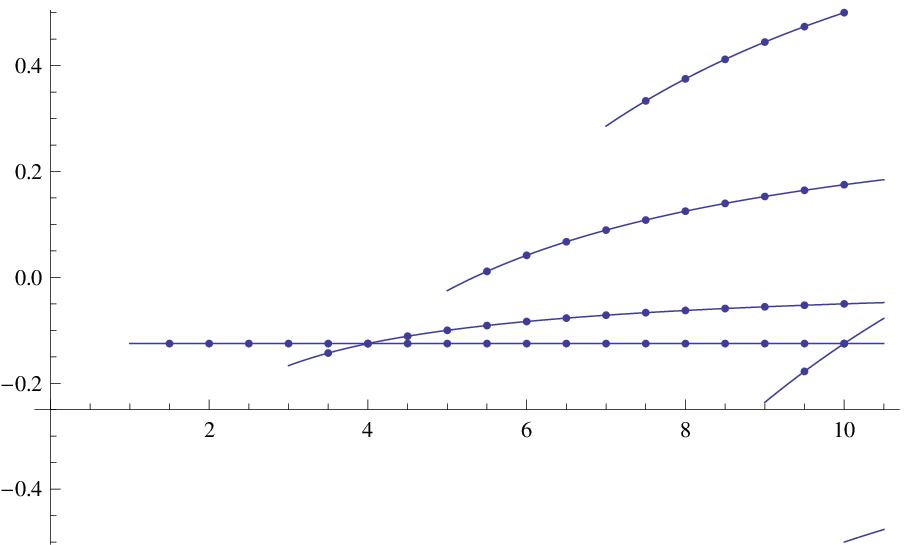}}
&
\resizebox{75mm}{!}{\includegraphics{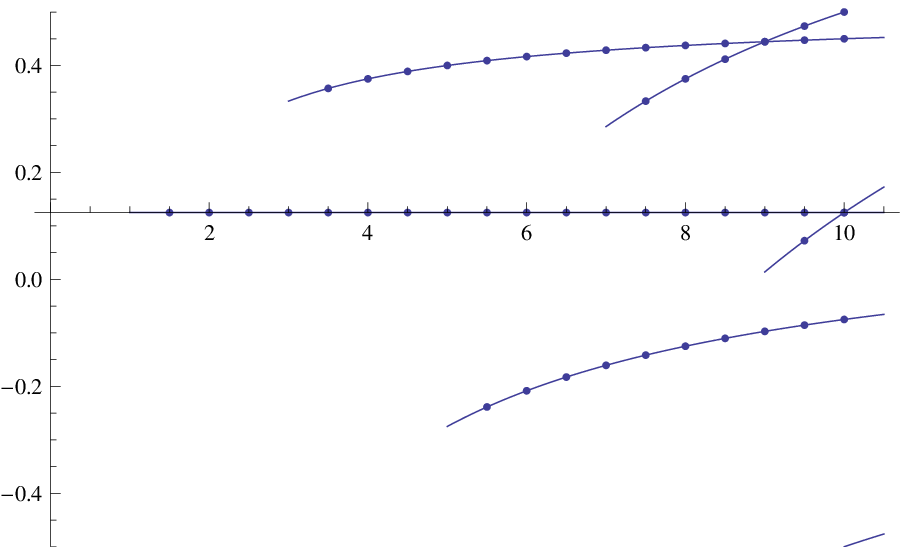}}
\\(a)\ $N=0$&(b)\ $N=1$\\
\resizebox{75mm}{!}{\includegraphics{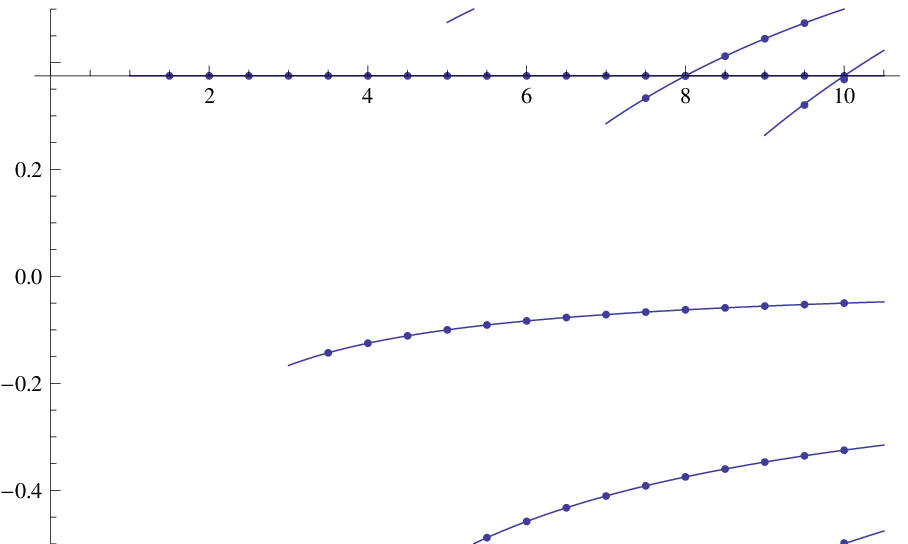}}
&
\resizebox{75mm}{!}{\includegraphics{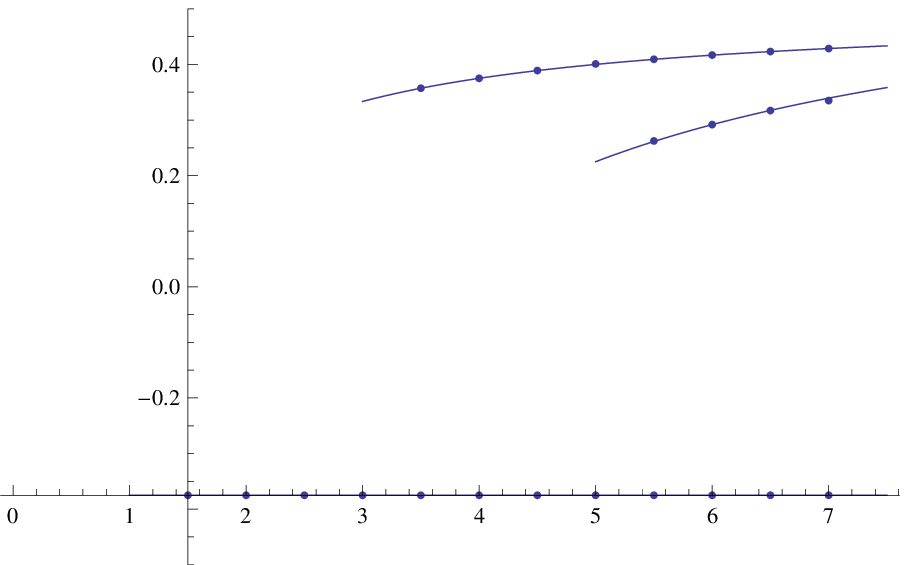}}
\\(c)\ $N=2$&(d)\ $N=3$
\end{tabular}
\end{center}
\caption{Numerical studies of the phase factor of the partition
function.
The horizontal axis denotes $k$ while the vertical axis shows the
phase normalized by $2\pi$.
Numerical data are depicted by points and our expectations
\eqref{phase} mod 1 are expressed by curves.
Each picture corresponds to different values of $N$ and each curve in
the picture starting from $k=2M-1$ corresponds different values of
$M$.
}
\label{fig:phase}
\end{figure}

\section{Grand Potential}\label{grandpotential}
After studying the phase factor of the partition function in the
previous section, let us turn to their absolute values and study the
grand potential defined by these absolute values \eqref{grandpot}.

\subsection{Grand potential at certain coupling constants}

As was found in \cite{HMO1,PY,HMO2} the computation of the ABJM
partition functions becomes particularly simple for $k=1,2,3,4,6$.
Also, as we have seen in section \ref{phasedepend}, the formula
\eqref{polescrossing} with integration along the real axis is
literally valid only for $0\le M\le k/2$.
Hence, we can compute various values of the partition function for
\begin{align}
(k,M)=(2,1),(3,1),(4,1),(6,1),(4,2),(6,2),(6,3).
\end{align}
The results of their absolute values are summarized in figure
\ref{values}.
\footnote{Some of the values were already found in \cite{Snote}.
Comparing our results with theirs is a very helpful check of our
formalism.
We are grateful to M.~Shigemori for sharing his unpublished notes with
us.}
As discussed in \cite{KWYphase}, the case of $k/2\le M\le k$ is
related to that of $0\le M\le k/2$ by the Seiberg duality.

\begin{figure}
\begin{align}
&|Z_2(0,1)|=\frac{1}{\sqrt{2}},\quad
|Z_2(1,2)|=\frac{1}{4\sqrt{2}\pi},\quad
|Z_2(2,3)|=\frac{\pi^2-8}{128\sqrt{2}\pi^2},\nonumber\\
&|Z_2(3,4)|=\frac{5\pi^2-48}{4608\sqrt{2}\pi^3},\quad
|Z_2(4,5)|=\frac{81\pi^4-848\pi^2+480}{294912\sqrt{2}\pi^4},\nonumber\\
&|Z_3(0,1)|=\frac{1}{\sqrt{3}},\quad
|Z_3(1,2)|=\frac{2-\sqrt{3}}{12},\quad
|Z_3(2,3)|=\frac{-(9\sqrt{3}-14)\pi+3\sqrt{3}}{432\pi},\nonumber\\
&|Z_3(3,4)|=\frac{14\pi-18-15\sqrt{3}}{1728\pi},\nonumber\\
&|Z_4(0,1)|=\frac{1}{2},\quad
|Z_4(1,2)|=\frac{\pi-2}{32\pi},\nonumber\\
&|Z_4(2,3)|=0.00003473909952494269119117566353230112859310233773233
\nonumber\\&\qquad
7261807934218890234955828380992634025931149937612,
\nonumber\\
&|Z_6(0,1)|=\frac{1}{\sqrt{6}},\quad
|Z_6(1,2)|=\frac{3\sqrt{3}-\pi}{108\sqrt{2}\pi},\nonumber\\
&|Z_6(2,3)|=3.76773027707758200049183186585155883429506373384028699
\nonumber\\&\qquad
96374213997516824024006754651401031813928511\times 10^{-6},
\nonumber\\
&|Z_6(3,4)|=5.26914099452731795482041046853051131744637477848566664
\nonumber\\&\qquad
22916096253100787064300949345207528685791\times 10^{-10},
\nonumber\\
&|Z_4(0,2)|=\frac{1}{2\sqrt{2}},\quad
|Z_4(1,3)|=\frac{4-\pi}{32\sqrt{2}\pi},\nonumber\\
&|Z_4(2,4)|=0.00001506227428345380302357520499270222421841701033492
\nonumber\\&\qquad
362553063511451195968480813607610027807404966983,\nonumber\\
&|Z_6(0,2)|=\frac{1}{6},\quad
|Z_6(1,3)|=\frac{7\pi-12\sqrt{3}}{432\pi},\nonumber\\
&|Z_6(2,4)|=4.77900663573206185466590506879892353173666149000261702
\nonumber\\&\qquad
495431896753514231026609667127826160173459\times 10^{-7},\nonumber\\
&|Z_6(0,3)|=\frac{1}{6\sqrt{2}},\quad
|Z_6(1,4)|=\frac{45\sqrt{2}-8\sqrt{6}\pi}{1296\pi},\nonumber\\
&|Z_6(2,5)|=2.34333487780752843368477720747976341731283580616750538
\nonumber\\&\qquad
345879256373591282194222350629426352014176\times 10^{-7}.\nonumber
\end{align}
\caption{Some exact or numerical values of partition functions.}
\label{values}
\end{figure}

Let us consider the grand potential defined with the absolute values
of the partition function \eqref{grandpot}.
Our strategy to determine the grand potential from the partition
function is exactly the same as that of \cite{HMO2} and we shall
explain only the key points here.
Since the grand potential with the sum truncated at finite $N$ always
contains some errors, it is known that fitting with the partition
function itself gives a result with better accuracy.
First we can compare the values found in figure \ref{values} with the
perturbative sum \eqref{pert}.
This already shows a good concordance.
For the $m$-th instanton effects, after subtracting the perturbative
sum and the major instanton effects, we fit the partition function
against the linear combinations of
\begin{align}
(-\partial_N)^nC_k^{-1/3}e^{A_k}
{\rm Ai}\Bigl[C_k^{-1/3}\Bigl(N+\frac{4m}{k}-B_{k,M}\Bigr)\Bigr].
\end{align}
Finally we reinterpret the result in terms of the grand potential.
Our results are summarized in figure \ref{grandpotkM}.

Compared with our study in \cite{HMO2,HMO3} we have much smaller
number of exact values of the partition function.
The lack of data causes quite significant numerical errors (about
1\%).
Nevertheless, since we have already known the rough structure of the
instanton expansion, we can find the exact instanton coefficient
without difficulty.

Note that the instanton coefficients of $(k,M)=(k,k/2)$ are similar to
those of $(k,M)=(k,0)$ for even $k$ and those of $(k,M)=(6,1), (6,2)$
are similar to those of $(k,M)=(3,1)$.
Due to this similarity, we have to confess that we only really fit the
values of the partition function for $(k,M)=(3,1)$ and $(k,M)=(4,1)$
up to seven instantons.
For other cases, after fitting for about three instantons, the
patterns become clear and we can bring the results from the known ones
and simply confirm the validity.

\begin{figure}
\begin{center}
\begin{align}
J^{\rm np}_{k=2,M=1}
&=\biggl[-\frac{4\mu^2+2\mu+1}{\pi^2}\biggr]e^{-2\mu}
+\biggl[-\frac{52\mu^2+\mu+9/4}{2\pi^2}+2\biggr]e^{-4\mu}\nonumber\\
&\quad+\biggl[-\frac{736\mu^2-304\mu/3+154/9}{3\pi^2}+32\biggr]e^{-6\mu}
\nonumber\\
&\quad+\biggl[-\frac{2701\mu^2-13949\mu/24+11291/192}{\pi^2}
+466\biggr]e^{-8\mu}\nonumber\\
&\quad+\biggl[-\frac{161824\mu^2-634244\mu/15+285253/75}{5\pi^2}
+6720\biggr]e^{-10\mu}\nonumber\\
&\quad+\biggl[-\frac{1227440\mu^2-5373044\mu/15+631257/20}{3\pi^2}
+\frac{292064}{3}\biggr]e^{-12\mu}+{\mathcal O}(e^{-14\mu}),
\nonumber\\
J^{\rm np}_{k=3,M=1}&=-\frac{2}{3}e^{-4\mu/3}-e^{-8\mu/3}
+\biggl[\frac{4\mu^2+\mu+1/4}{3\pi^2}-\frac{34}{9}\biggr]e^{-4\mu}
+\frac{25}{18}e^{-16\mu/3}+\frac{68}{15}e^{-20\mu/3}\nonumber\\
&\quad+\biggl[-\frac{52\mu^2+\mu/2+9/16}{6\pi^2}+\frac{296}{9}\biggr]e^{-8\mu}
-\frac{1894}{189}e^{-28\mu/3}+{\mathcal O}(e^{-32\mu/3}),\nonumber\\
J^{\rm np}_{k=4,M=1}
&=\biggl[\frac{4\mu^2+2\mu+1}{2\pi^2}-2\biggr]e^{-2\mu}
+\biggl[-\frac{52\mu^2+\mu+9/4}{4\pi^2}+18\biggr]e^{-4\mu}
\nonumber\\
&\quad+\biggl[\frac{736\mu^2-304\mu/3+154/9}{6\pi^2}-\frac{608}{3}\biggr]
e^{-6\mu}+{\mathcal O}(e^{-8\mu}),\nonumber\\
J^{\rm np}_{k=6,M=1}&=\frac{2}{3}e^{-2\mu/3}-e^{-4\mu/3}
+\biggl[-\frac{4\mu^2+2\mu+1}{3\pi^2}+\frac{34}{9}\biggr]e^{-2\mu}
+\frac{25}{18}e^{-8\mu/3}-\frac{68}{15}e^{-10\mu/3}\nonumber\\
&\quad
+\biggl[-\frac{52\mu^2+\mu+9/4}{6\pi^2}+\frac{296}{9}\biggr]e^{-4\mu}
+\frac{1894}{189}e^{-14\mu/3}+{\mathcal O}(e^{-16\mu/3}),\nonumber\\
J^{\rm np}_{k=4,M=2}
&=-e^{-\mu}+\biggl[-\frac{4\mu^2+2\mu+1}{2\pi^2}\biggr]e^{-2\mu}
-\frac{16}{3}e^{-3\mu}
+\biggl[-\frac{52\mu^2+\mu+9/4}{4\pi^2}+2\biggr]e^{-4\mu}
\nonumber\\&\quad
-\frac{256}{5}e^{-5\mu}
+\biggl[-\frac{736\mu^2-304\mu/3+154/9}{6\pi^2}+32\biggr]e^{-6\mu}
-\frac{4096}{7}e^{-7\mu}+{\mathcal O}(e^{-8\mu}),\nonumber\\
J^{\rm np}_{k=6,M=2}
&=-\frac{2}{3}e^{-\frac{2}{3}\mu}-e^{-\frac{4}{3}\mu}
+\biggl[\frac{4\mu^2+2\mu+1}{3\pi^2}-\frac{34}{9}\biggr]e^{-2\mu}
+\frac{25}{18}e^{-\frac{8}{3}\mu}+\frac{68}{15}e^{-\frac{10}{3}\mu}
\nonumber\\&\quad
+\biggl[-\frac{52\mu^2+\mu+9/4}{6\pi^2}+\frac{296}{9}\biggr]e^{-4\mu}
-\frac{1894}{189}e^{-\frac{14}{3}\mu}+{\mathcal O}(e^{-16\mu/3}),\nonumber\\
J^{\rm np}_{k=6,M=3}
&=-\frac{4}{3}e^{-\frac{2}{3}\mu}-2e^{-\frac{4}{3}\mu}
+\biggl[-\frac{4\mu^2+2\mu+1}{3\pi^2}-\frac{20}{9}\biggr]e^{-2\mu}
-\frac{88}{9}e^{-\frac{8}{3}\mu}-\frac{108}{5}e^{-\frac{10}{3}\mu}
\nonumber\\&\quad
+\biggl[-\frac{52\mu^2+\mu+9/4}{6\pi^2}-\frac{298}{9}\biggr]e^{-4\mu}
-\frac{25208}{189}e^{-\frac{14}{3}\mu}+{\mathcal O}(e^{-16\mu/3}).\nonumber
\end{align}
\end{center}
\caption{Grand potential obtained by fitting the exact or numerical
values of partition function.}
\label{grandpotkM}
\end{figure}

\subsection{Grand potential for general coupling constants}
Now let us compare the grand potential in figure \ref{grandpotkM} with
a natural generalization of our instanton expansion in the ABJM matrix
model.
We first observe a good match for the $m$-th pure worldsheet instanton
effects for $m<k/2$.
Secondly, we find that we have to modify signs by the factor
$(-1)^{M\ell}$ for the functions $a_k^{(\ell)}$,
$\widetilde b_k^{(\ell)}$, $\widetilde c_k^{(\ell)}$ characterizing
the membrane instantons.
This is important not only for ensuring the cancellation of the
divergences as we noted in subsection \ref{polecancel}, but also
for reproducing the correct coefficients of $\pi^{-2}$.
Thirdly, we confirm that the prescription of introducing the sign
factor $(-1)^{M\ell}$ reproduces correctly the bound states, where
there are no pure membrane instanton effects.

As for the constant term in the membrane instanton, there is an
ambiguity as long as it does not raise any singularities.
There are two candidates for it:
One is of course to take exactly the same constant term as in the ABJM
case when expressed in terms of the chemical potential $\mu$.
Another choice is to define $\widetilde c_k^{(\ell)}$ by respecting
the derivative relation.
Namely, in the ABJM matrix model it was observed that, when the grand
potential $J_k(\mu)$ is expressed in terms of the effective chemical
potential $\mu_{\rm eff}$, the constant term is the derivative of the
linear term \eqref{bc}.
These two choices give different answers because of the change in
$B_{k,M}$.
Comparing these two candidates with our numerical results in figure
\ref{grandpotkM}, we have found that neither of them gives the correct
answer.
Instead, the difference with the latter one is always $k/M$ times
bigger than the former one.
From this observation, we can write down a closed form for our
conjecture in \eqref{largemu}.
We have checked this conjecture up to seven worldsheet instantons and
four membrane instantons.

Although we restrict our analysis to the case $0\le M\le k/2$, we
believe our final conjecture \eqref{largemu} is valid for the whole
region of $0\le M\le k$ because of the consistency with the Seiberg
duality.
Though the expression \eqref{largemu} does not look symmetric in the
exchange between $M$ and $k-M$, if we pick up a pair of integers whose
sum is $k$, we find two identical instanton expansion series after
cancelling the divergences.\footnote{We are grateful to S.~Hirano,
K.~Okuyama, M.~Shigemori for valuable comments on it.}
We have checked this fact for all the pairs whose sums are
$k=1,2,3,4,6$.

\section{Discussions}\label{discussion}

In this paper we have proposed a Fermi gas formalism for the partition
function and the half-BPS Wilson loop expectation values in the ABJ
matrix models.
Our formalism identifies the fractional branes in the ABJ theory as a
certain type of Wilson loops in the ABJM theory.
Hence, our formalism shares the same density matrix as that of the
ABJM matrix model, which is suitable for the numerical studies. 
We have continued to study the exact or numerical values of the
partition function using this formalism.
Based on these values, we can determine the instanton expansion of the
grand potential at some coupling constants $k=2,3,4,6$ and conjecture
the expression \eqref{largemu} for general coupling constants.

Let us raise several points which need further clarifications.

The first one is the phase factor of our conjecture.
As we have seen in figure \ref{fig:phase}, we have checked this
conjecture for $N=0,1,2,3$ carefully.
However when $N\ge 3$ the numerical errors become significant and it
is difficult to continue the numerical studies with high accuracy for
large $k$.
It is desirable to study it more extensively.

The second one is the relation to the formalism of \cite{AHS}, which
looks very different from ours.
As pointed out very recently in \cite{H} it was possible to rewrite
the formalism of \cite{AHS} into a mirror expression where the
physical interpretation becomes clearer.
We would like to see the exact relation between theirs and ours.

Thirdly, we have found an extra term in \eqref{largemu} proportional
to the quantum mirror map $e_k^{(\ell)}$ \cite{HMMO}.
We have very few data to identify its appearance and it would be
great to check it also from the WKB expansion \cite{MP,CM}, though we
are not sure whether the restriction $0\le M\le k/2$ gives any
difficulty in the WKB analysis.
Furthermore, we cannot identify its origin in the refined topological
strings or the triple sine functions as proposed in \cite{HMMO}.
We hope to see its origin in these theories.
It may be a key to understand the gravitational interpretation
\cite{BGMS} of the membrane instantons.

The fourth one is about the Wilson loop in the ABJ theory.
After seeing that there are only new terms appearing in the membrane
instantons, we expect that the instanton expansion of the vacuum
expectation values of the Wilson loop should be expressed similarly
as that in the ABJM case \cite{HHMO}.
However, we have not done any numerical studies to support it.
Also, it is interesting to see how our study is related to other
recent works on the ABJ Wilson loops \cite{CGMS,BGLP,GMPS}.

Finally, one of the motivation to study the ABJ matrix model is its
relation to the higher spin models.
Since we have written down the grand potential explicitly, it is
possible to take the limit proposed in \cite{CMSY}.
We would like to see what lessons can be learned for the higher spin
models.

\section*{Acknowledgements}
We are grateful to Jaemo Park and Masaki Shigemori for very
interesting communications and for sharing their private notes with
us.
Sa.Mo.\ is also grateful to H.~Fuji, H.~Hata, Y.~Hatsuda, S.~Hirano,
M.~Honda, M.~Marino and K.~Okuyama for valuable discussions since the
collaborations with them.
The work of Sh.Ma.\ was supported by JSPS Grant-in-Aid for Young
Scientists (B) 25800062.

\appendix
\section{A useful determinantal formula}\label{detproof}
\begin{lemma}
Let $(\phi_i)_{1 \le i \le n+r}$ and $(\psi_j)_{1 \le j \le n}$ be functions
on a measurable space and let 
$(\zeta_{iq})_{\begin{subarray}{c} 1 \le i \le n+r \\ 1 \le q \le r \end{subarray}}$
be an array of constants. Then we have
\begin{align}
&\frac{1}{n!} \int \prod_{k=1}^n d x_k \cdot \det 
\left[ (\phi_i(x_k))
_{\begin{subarray}{c} 1 \le i \le n+r \\ 1 \le k \le n \end{subarray}}
\ (\zeta_{iq})
_{\begin{subarray}{c} 1 \le i \le n+r \\ 1 \le q \le r \end{subarray}} \right] \cdot
\det (\psi_j(x_k))_{1 \le j,k \le n } \nonumber\\
& = \det 
\left[ (m_{ij})
_{\begin{subarray}{c} 1 \le i \le n+r \\ 1 \le j \le n \end{subarray}}
\ (\zeta_{iq})
_{\begin{subarray}{c} 1 \le i \le n+r \\ 1 \le q \le r \end{subarray}} \right],
\label{detformula}
\end{align}
with
$m_{ij} = \int d x \phi_i(x) \psi_j(x)$.
\end{lemma}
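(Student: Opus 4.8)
The plan is to reduce the statement to the classical Andréief (Heine–Gram) identity by peeling off the constant columns via a Laplace expansion. The key observation is that the $(n+r)\times(n+r)$ determinant on the left of \eqref{detformula} and the one on the right share exactly the same final $r$ columns $(\zeta_{iq})$, so expanding both determinants along these columns will produce matching sign factors and the problem collapses onto the square $n\times n$ blocks.

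First I would apply the generalized Laplace expansion of the first determinant on the left along its last $r$ columns (the constant block occupying column positions $n+1,\dots,n+r$). This writes it as a signed sum over $r$-element subsets $S\subseteq\{1,\dots,n+r\}$ of the row indices,
\[
\det\bigl[(\phi_i(x_k))\,\big|\,(\zeta_{iq})\bigr]
=\sum_{|S|=r}\varepsilon(S)\,\det(\zeta_{iq})_{i\in S}\cdot
\det(\phi_i(x_k))_{i\in S^c,\,1\le k\le n},
\]
where $S^c$ has exactly $n$ elements and $\varepsilon(S)=(-1)^{\sum_{i\in S}i+\sum_{j=n+1}^{n+r}j}$ is the Laplace sign, which depends only on $S$ since the expanded column set is fixed. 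The factor $\det(\zeta_{iq})_{i\in S}$ carries no $x$-dependence and pulls out of the integral. Next, for each fixed $S$ I would apply the standard Andréief identity to the remaining product of two $n\times n$ determinants,
\[
\frac{1}{n!}\int\prod_{k=1}^n dx_k\,
\det(\phi_i(x_k))_{i\in S^c}\,\det(\psi_j(x_k))_{1\le j,k\le n}
=\det\Bigl(\int dx\,\phi_i(x)\psi_j(x)\Bigr)_{i\in S^c,\,1\le j\le n}
=\det(m_{ij})_{i\in S^c}.
\]
Summing over $S$ with the identical signs $\varepsilon(S)$ reassembles precisely the Laplace expansion of $\det[(m_{ij})\,|\,(\zeta_{iq})]$ along its last $r$ columns, which is the right-hand side of \eqref{detformula}.

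The only real bookkeeping obstacle is verifying that the Laplace signs on the two sides genuinely coincide, but this is immediate: the column index set being expanded along is $\{n+1,\dots,n+r\}$ for both determinants, so $\varepsilon(S)$ depends on $S$ alone and not on whether the surviving entries are $\phi_i(x_k)$ or $m_{ij}$. Hence no residual sign ambiguity survives and the argument rests cleanly on the single-determinant Laplace identity together with Andréief. If a fully self-contained route is preferred, the Andréief step can itself be proved by expanding both $n\times n$ determinants as sums over permutations, integrating termwise, and symmetrizing over the $n!$ relabelings of the integration variables $x_1,\dots,x_n$; this supplies the $1/n!$ and recovers the Gram matrix $(m_{ij})$ directly.
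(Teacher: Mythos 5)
Your proof is correct, but it follows a genuinely different route from the paper's. The paper proves the lemma in one shot by expanding \emph{both} determinants on the left as permutation sums ($\sigma \in S_{n+r}$ for the $\zeta$-augmented determinant, $\tau \in S_n$ for the $\psi$-determinant), integrating termwise to produce $\prod_{k=1}^n m_{\sigma(k),\tau(k)}$, recognizing the $\sigma$-sum at fixed $\tau$ as $\det\left[(m_{i,\tau(j)})\ (\zeta_{iq})\right]$, and finally absorbing $\sgn(\tau)$ via the alternating property of determinants so that the $\tau$-sum gives $n!$ identical terms cancelling the prefactor $1/n!$; in effect the paper re-runs the Andr\'eief argument from scratch with the constant block riding along inside the permutation sum. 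You instead peel off the constant block first, via the generalized Laplace expansion along the last $r$ columns, and then invoke the classical Andr\'eief identity on each complementary $n\times n$ block; your sign bookkeeping is sound, since both Laplace expansions (of the left-hand and right-hand determinants) are along the same column positions $\{n+1,\dots,n+r\}$, so the subset sign $\varepsilon(S)$ is common to both and the reassembly is exact. What your route buys is modularity: it exposes the lemma as classical Andr\'eief plus purely combinatorial bookkeeping, at the cost of invoking the multi-column Laplace expansion with its complementary-minor signs. The paper's route buys self-containedness: all signs are managed by $\sgn(\sigma)$, $\sgn(\tau)$ and the alternating property, with no subset signs ever appearing, which is why the paper can present it in a few display lines without quoting any external identity. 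Both arguments are complete and rigorous.
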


\begin{proof}
Expand two determinants on the left hand side with respect to columns:
\begin{align}
& \frac{1}{n!} \int \prod_{k=1}^n d x_k 
\det \left[
(\phi_i(x_k))_{\begin{subarray}{c} 1 \le i \le n+r \\ 1 \le k \le n \end{subarray}} \ 
(\zeta_{iq})_{\begin{subarray}{c} 1 \le i \le n+r \\ 1 \le q \le r \end{subarray}} \right] \cdot
\det (\psi_j(x_k))_{1 \le j,k \le n }  \nonumber \\
&= \frac{1}{n!} \int \prod_{k=1}^n d x_k 
\sum_{\sigma \in S_{n+r}} \sgn(\sigma)\prod_{k=1}^n 
\phi_{\sigma(k)}(x_k) \cdot \prod_{q=1}^r \zeta_{\sigma(n+q),q} 
\sum_{\tau \in S_{n}} \sgn (\tau) \prod_{k=1}^n \psi_{\tau(k)}(x_k) \nonumber\\
&= \frac{1}{n!} \sum_{\sigma \in S_{n+r}} \sgn(\sigma) 
\prod_{q=1}^r \zeta_{\sigma(n+q),q} \cdot \sum_{\tau \in S_n} \sgn(\tau)
\prod_{k=1}^n \int dx \phi_{\sigma(k)}(x) \psi_{\tau(k)}(x)\nonumber\\
&= \frac{1}{n!} \sum_{\tau \in S_n} \sgn (\tau) \sum_{\sigma \in S_{n+r}} \sgn(\sigma) 
\prod_{q=1}^r \zeta_{\sigma(n+q),q} \cdot \prod_{k=1}^n m_{\sigma(k), \tau(k)} \nonumber\\
&= \frac{1}{n!} \sum_{\tau \in S_n} \sgn(\tau) \det 
\left[ 
(m_{i,\tau(j)})_{\begin{subarray}{c} 1 \le i \le n+r \\ 1 \le j \le n \end{subarray}} \ 
(\zeta_{iq})_{\begin{subarray}{c} 1 \le i \le n+r \\ 1 \le q \le r \end{subarray}} \right].
\end{align}
It follows from the alternating property for determinants that this equals to
\begin{equation}
\frac{1}{n!} \sum_{\tau \in S_n} 
\det 
\left[ 
(m_{i,j})_{\begin{subarray}{c} 1 \le i \le n+r \\ 1 \le j \le n \end{subarray}} \ 
(\zeta_{iq})_{\begin{subarray}{c} 1 \le i \le n+r \\ 1 \le q \le r \end{subarray}} \right]
=\det 
\left[ 
(m_{i,j})_{\begin{subarray}{c} 1 \le i \le n+r \\ 1 \le j \le n \end{subarray}} \ 
(\zeta_{iq})_{\begin{subarray}{c} 1 \le i \le n+r \\ 1 \le q \le r \end{subarray}} \right].
\end{equation}
\end{proof}

\section{Expansion of Fredholm determinant}\label{fredholm}

Although we have used an infinite-dimensional version, we shall
give a finite-dimensional version of the identity below.
For a positive integer $n$, we let $[n]=\{1,2,\dots,n\}$.

\begin{lemma}
Let $N, L$ be non-negative integers.
Let $A=(a_{ij})$, $B=(b_{iq})$, $C=(c_{pj})$,
and $D=(d_{pq})$ be matrices of finite sizes
$N \times N$, $N \times L$, $L \times N$, and $L \times L$,
respectively.
Let $\mathbf{1}'_{NL}$ be the 
$(N+L) \times (N+L)$ diagonal matrix
whose the first $N$ diagonal entries are $1$ and 
other entries are $0$. 
Then the following identity holds.
\begin{equation}
\det \left( \mathbf{1}'_{NL}+ \begin{pmatrix} A & B \\ C & D \end{pmatrix}
\right)
=\sum_{n=0}^N \frac{1}{n!} \sum_{k_1,\dots,k_n=1}^N  \det 
\begin{pmatrix} (a_{k_i, k_j})_{1 \le i,j \le n}  & 
(b_{k_i, q})_{1 \le i \le n, 1 \le q \le L} \\
(c_{p, k_j})_{1 \le p \le L, 1 \le j \le n} 
& D \end{pmatrix}.
\label{eq:fredholm}
\end{equation}
\end{lemma}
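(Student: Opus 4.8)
The plan is to prove \eqref{eq:fredholm} by exploiting the multilinearity of the determinant in the first $N$ columns of $\mathbf{1}'_{NL}+\begin{pmatrix}A&B\\C&D\end{pmatrix}$. Writing $\hat e_j$ for the standard basis column vector of length $N+L$ supported in row $j$, the $j$-th column (for $1\le j\le N$) of this matrix is the sum of $\hat e_j$ and the $j$-th column of $\begin{pmatrix}A\\C\end{pmatrix}$, while the last $L$ columns coincide with those of $\begin{pmatrix}B\\D\end{pmatrix}$. Expanding by multilinearity over all ways of choosing, in each of the first $N$ columns, either the $\hat e_j$ part or the genuine column, I would obtain a sum over subsets $S\subseteq[N]$ (the set of columns where $\hat e_j$ is selected) of the determinants of the associated matrices.

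Next, for a fixed $S$ I would perform a Laplace (cofactor) expansion along each column $j\in S$. Since such a column equals $\hat e_j$, it has a single nonzero entry, located in row $j$; because this entry sits on the diagonal (row index equal to column index), the cofactor sign is $(-1)^{j+j}=+1$, and the expansion simply deletes row $j$ and column $j$. Iterating over all $j\in S$ — and observing that deleting a row and a column at equal positions preserves the diagonal alignment of the remaining basis-vector columns — each step contributes the trivial sign, so the term for $S$ equals the determinant of the submatrix obtained by keeping the rows and columns indexed by $T\cup\{N+1,\dots,N+L\}$, where $T=[N]\setminus S$. Explicitly this submatrix is $\begin{pmatrix}(a_{ij})_{i,j\in T}&(b_{iq})_{i\in T}\\(c_{pj})_{j\in T}&D\end{pmatrix}$. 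Letting $S$ range over all subsets, equivalently $T$ over all subsets, the left-hand side becomes $\sum_{T\subseteq[N]}\det\begin{pmatrix}(a_{ij})_{i,j\in T}&(b_{iq})_{i\in T}\\(c_{pj})_{j\in T}&D\end{pmatrix}$. The only delicate point is the sign bookkeeping in this repeated cofactor expansion, and this is where I would spend the most care; I would confirm on a small instance (say $N=2$, $L=0$) that every contribution indeed carries sign $+1$.

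Finally, I would reconcile this subset sum with the ordered-tuple sum on the right-hand side of \eqref{eq:fredholm}. If the tuple $(k_1,\dots,k_n)$ has a repeated entry, the displayed matrix has two identical rows (and columns) in its $A$- and $B$-blocks, so its determinant vanishes; hence only tuples with pairwise distinct entries survive. For each $n$-element subset $T\subseteq[N]$ there are exactly $n!$ such tuples, and reordering a tuple permutes the rows and the columns of the matrix by the same permutation, leaving the determinant unchanged. Therefore $\frac{1}{n!}\sum_{k_1,\dots,k_n}$ restricted to distinct entries collapses to $\sum_{|T|=n}$, and summing over $n$ reproduces exactly $\sum_{T\subseteq[N]}$. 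Matching the two expressions completes the finite-dimensional identity, and the infinite-dimensional (Fredholm) version used in the main text follows from it by the standard limiting argument.
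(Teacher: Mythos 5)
Your proof is correct and follows essentially the same route as the paper's: both decompose the determinant so that each index in $[N]$ contributes either the identity part or the genuine matrix entry, arrive at the same intermediate identity $\det(\mathbf{1}'_{NL}+\mathbf{A})=\sum_{I\subseteq[N]}\det\bigl(\mathbf{a}_{ij}\bigr)_{i,j\in I\cup\{N+1,\dots,N+L\}}$, and then match this subset sum with the symmetrized tuple sum exactly as you do (repeated indices kill the determinant, distinct tuples overcount each subset $n!$ times). The only difference is cosmetic bookkeeping — you expand by column multilinearity plus cofactor expansion, while the paper expands via the Leibniz permutation sum and Kronecker deltas — so the two arguments are interchangeable.
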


\begin{proof}
Put $\mathbf{A}= (\mathbf{a}_{ij})_{1 \le i,j \le N+L}=
 \begin{pmatrix} A & B \\ C & D \end{pmatrix}$.
Expanding the determinant with respect to rows, we have
\begin{equation}
\det ( \mathbf{1}'_{NL}+ \mathbf{A})
=
\sum_{\sigma \in S_{N+L}} \sgn (\sigma) 
\prod_{i=1}^N (\delta_{i,\sigma(i)}+\mathbf{a}_{i, \sigma(i)}) \times \prod_{p=1}^L
\mathbf{a}_{N+p, \sigma(N+p)}.
\end{equation}
Divide the product for $i$:
for each $\sigma \in S_{N+L}$, 
\begin{equation}
\prod_{i=1}^N (\delta_{i,\sigma(i)}+\mathbf{a}_{i, \sigma(i)})
= \sum_{I \subset [N]} \prod_{i \in I} \mathbf{a}_{i, \sigma(i)}
\times \prod_{i \in [N] \setminus I} \delta_{i,\sigma(i)}.
\end{equation}
Here the product $\prod_{i \in [N] \setminus I} \delta_{i,\sigma(i)}$
vanishes unless $\sigma(i)=i$ for all $i \in  [N] \setminus I$,
i.e., unless the support $\mathrm{supp}(\sigma)$
of $\sigma$ is a subset of $I \cup \{N+1,\dots,N+L \}$.
In that case, the permutation 
$\sigma$ can be seen as a permutation on 
$I \cup \{N+1,\dots, N+L\}$.
Denoting by $S_{I \cup \{N+1,\dots, N+L\}}$ the
permutation group consisting of such permutations,
\begin{align}
\det ( \mathbf{1}'_{NL}+ \mathbf{A})
=& \sum_{I \subset [N]} \sum_{\sigma \in S_{I \cup
\{N+1,\dots,N+L\}}}
\sgn(\sigma) \prod_{i \in I} \mathbf{a}_{i, \sigma(i)}
\times \prod_{p=1}^L
\mathbf{a}_{N+p, \sigma(N+p)} \nonumber\\
=& \sum_{I \subset [N]} \det 
\begin{pmatrix} (\mathbf{a}_{i,j})_{i,j \in I} &
(\mathbf{a}_{i,N+q})_{i \in I, q \in [L]} \\
(\mathbf{a}_{N+p,j})_{p \in [L], j \in I} & 
(\mathbf{a}_{N+p,N+q})_{p,q \in [L]} \end{pmatrix}.
\end{align}
It is immediate to see that
this identity presents the desired identity.
\end{proof}

\end{document}